\declaretheorem{theorem}
\newtheorem{lemma}{Lemma}
\newtheorem{definition}{Definition}
\newtheorem*{definition*}{Definition}
\DeclareMathOperator{\Dim}{Dim}
\title{A Markov-Chain Characterization of Finite-State Dimension and a Generalization of Agafonov’s Theorem}
\author{Laurent Bienvenu, Hugo Gimbert, Subin Pulari}
\date {\today}
\begin{document}

\maketitle

\begin{abstract}
	
Finite-state dimension quantifies the asymptotic rate of information in an infinite sequence as perceived by finite automata. For a fixed alphabet, the infinite sequences that have maximal finite-state dimension are exactly those that are Borel normal, i.e., in which all words of any given length appear with the same frequency. A theorem of Schnorr and Stimm (1972) shows that a real number is Borel normal if and only if, for every finite-state irreducible Markov chain with fair transitions, when the chain is simulated using the binary expansion of the given number, the empirical distribution of states converges to its stationary distribution. In this paper we extend this correspondence beyond normal numbers. We show that the finite-state dimension of a sequence can be characterized in terms of the conditional Kullback–Leibler divergence between the limiting distributions arising from the simulation of Markov chains using the given sequence and their stationary distributions. This provides a new information-theoretic characterization of finite-state dimension which generalizes the Schnorr–Stimm result.

As an application, we prove a generalization of Agafonov’s theorem for normal numbers. Agafonov’s theorem states that a sequence is normal if and only if every subsequence selected by a finite automaton is also normal. We extend this to arbitrary sequences by establishing a tight quantitative relationship between the finite-state dimension of a sequence and the finite-state dimensions of its automatic subsequences. 

\end{abstract}

\section{Introduction}

Finite-state dimension is a quantitative measure of the information content of an infinite sequence as observable by finite automata.  Introduced by Dai, Lathrop, Lutz, and Mayordomo~\cite{Dai2004}, it serves as a finite-state analogue of classical Hausdorff dimension. Originally defined in terms of winning rates of finite-state gamblers called $s$-gales, finite-state dimension can be characterized using a variety of computational and information-theoretic models, including finite-state predictors, compressors, and automatic Kolmogorov complexity~\cite{Dai2004,athreya2007effective,bourke2005entropy,Kozachinskiy2021}.  These equivalent formulations highlight the robustness of the notion and its close connections with information theory and automata-theory.

Given a finite alphabet~$\Sigma$, an infinite sequence $ X \in \Sigma^\infty $ is \emph{normal} if every block $ w \in \Sigma^k $ of length~$k$ occurs in $ X $ with limiting frequency $ |\Sigma|^{-k} $.  Normality captures randomness at the level of the occurrence frequencies of finite patterns within an infinite sequence.  A classical example is the Champernowne sequence $0123456789101112131415\dots$, which is normal over the decimal alphabet~$\{0,\ldots,9\}$.  It is conjectured that for any integer $b>2$, the base-$b$ expansions of natural constants such as~$\pi$,~$e$, and~$\sqrt{2}$ are normal in base~$b$, a longstanding open problem in mathematics.
Normal sequences are precisely those sequences having finite-state dimension equal to~1, that is, the maximal possible finite-state dimension~\cite{bourke2005entropy}.  In 1972, Schnorr and Stimm~\cite{Schnorr1972} established their landmark dichotomy theorem: for any sequence~$X$, if $X$ is non-normal, there exists a finite-state gambler that can win capital on~$X$ at an exponential rate, whereas if $X$ is normal, every finite-state gambler loses capital exponentially fast.  In the same work, they also provided a probabilistic perspective on normality, showing that a real number is normal if and only if, when its binary expansion drives any finite, irreducible Markov chain with fair transitions, the empirical distribution of visited states converges to the stationary distribution of that chain.

In this paper, for better readability, we will restrict our attention to binary infinite sequences, that is, when the alphabet $\Sigma$ is just $\{0,1\}$. However, all our results can be fairly easily generalized to arbitrary finite alphabets. 

Our first main result generalizes the Schnorr–Stimm characterization from normal sequences to sequences of arbitrary finite-state dimension. Given a sequence~$X$, we consider all finite-state \emph{fair} irreducible Markov chains~$M$ (those in which every state has two outgoing transitions labeled $0$ and $1$, each taken with probability $ \tfrac{1}{2} $), and we let~$M$ evolve according to the symbols of $ X $—that is, each symbol of $ X $ determines which labeled transition of $ M $ is followed at that step.  
For such an $ M $, let $ \mathcal{W}_M(X) $ denote the set of limiting empirical joint distributions of states and transitions observed during the run of $ M $ on $ X $.  
Our main theorem shows that the finite-state dimension of $ X $ can be expressed as  
\[
\dim_{\mathrm{FS}}(X)
= 1 - 
\sup_{M}
\sup_{\mu \in \mathcal{W}_M(X)}
D_{\mathrm{KL}}\!\big(\mu(E\mid Q) \,\|\, \pi_M(E \mid Q) \big),
\]
where the outer supremum is taken over all fair irreducible Markov chains~$M$, $\pi_M$ is the stationary distribution of~$M$, and $D_{\mathrm{KL}}$ denotes the Kullback--Leibler divergence.
We also provide an analogous characterization for the finite-state strong dimension $ \Dim_{\mathrm{FS}}(X) $.  
When $ \dim_{\mathrm{FS}}(X) = 1 $, the divergence term vanishes, recovering the Schnorr–Stimm lemma as a special case.  
Thus, our theorem extends the probabilistic characterization of normality due to Schnorr and Stimm to a full quantitative description of finite-state dimension for arbitrary sequences.

Our second main result concerns the behavior of finite-state dimension under finite-state selection.  
A \emph{finite-state selector} is a finite automaton equipped with a set of \emph{selecting states}: when run on an infinite sequence, it outputs the symbols read at those positions where the automaton enters a selecting state.  
Agafonov~\cite{Agafonov1968} proved that a sequence is normal if and only if every finite-state selector extracts from it a subsequence that is also normal. The relationship between subsequence selection and finite-state dimension is substantially more delicate. There exist simple examples, such as the \emph{diluted sequence} obtained by interleaving a normal sequence with zeros \cite{Dai2004}, whose finite-state dimension is only $\frac{1}{2}$, yet there are finite-state selectors that extract from it a completely non-random subsequence of zeros and others that extract a normal subsequence. 
We demonstrate the utility of our Markov-chain characterization to extend Agafonov’s classical normality result to all sequences of sufficiently high finite-state dimension, showing that the stability of randomness under finite-state selection persists beyond the case of normal numbers. For every irreducible finite-state selector~$\mathcal{S}$ with non-trivial set of selecting states (i.e., some but not all states are selecting states), we show that for every small error tolerance~$\varepsilon > 0$ there exists~$\delta > 0$ such that, whenever the finite-state dimension of~$X$ is at least~$1 - \delta$, the following inequality holds:
\[
\lambda_{\mathcal{S}}\dim_{\mathrm{FS}}(\mathcal{S}(X))
+(1-\lambda_{\mathcal{S}})\Dim_{\mathrm{FS}}(\mathcal{S}^{c}(X))
\ge \dim_{\mathrm{FS}}(X) - \varepsilon,
\]
where $\lambda_{\mathcal{S}} = \sum_{q \in S} \pi_{\mathcal{S}}(q)$
is the total stationary probability of the selecting states in the underlying Markov chain of~$\mathcal{S}$, $\mathcal{S}(X)$ is the subsequence of~$X$ selected by~$\mathcal{S}$ and $\mathcal{S}^{c}(X)$ is the subsequence \emph{not} selected by $\mathcal{S}$.

This result refines Agafonov’s selection theorem by showing that the total finite-state “information mass” of a sequence is approximately preserved under automatic selection.  
When $\dim_{\mathrm{FS}}(X)=1$, the inequality collapses to Agafonov’s theorem for normal numbers. We further demonstrate the utility of this generalization by deriving lower bounds on the finite-state dimensions of selected subsequences. Finally, we show that the bound is tight and analyze conditions under which the relationship fails to hold.

The connection between normality and finite-state dimension has been conjectured to extend far beyond these specific results.  
Lutz~\cite{LutzTalk2012} articulated this idea in what he called the \emph{Normality–Dimension Thesis}, which states that  
“every theorem about normality has a corresponding generalization in terms of finite-state dimension.”  
This thesis has been substantiated in several recent works showing that various theorems and statements concerning normality are special cases of more general results about finite-state dimension~\cite{Becher2025,fsdra,Gu2007,Nandakumar2025,Lutz2023,Kozachinskiy2021}.  
In the same spirit, our work provides further evidence for this thesis by demonstrating that both the Schnorr–Stimm lemma and Agafonov’s theorem emerge as special cases of broader quantitative principles concerning finite-state dimension.

\section{Preliminaries}
\label{sec:preliminaries}
Let $\{0,1\}^*$ denote the set of all finite binary strings, including the empty string~$\lambda$, and let $\{0,1\}^\infty$ denote the set of all infinite binary sequences.  
For $w \in \{0,1\}^*$, let $|w|$ denote its length.  
If $X = x_0 x_1 x_2 \ldots \in \{0,1\}^\infty$, then for each $n \in \mathbb{N}$, the prefix of $X$ of length~$n$ is written
$
X{\upharpoonright}n = x_0 x_1 \cdots x_{n-1}.
$. 
Given a finite set $\mathcal{X}$, we call $\mathsf{Prob}(\mathcal{X})$ the set of probability measures over~$\mathcal{X}$, which can be identified with the set of function $\pi: \mathcal{X} \rightarrow [0,1]$ such that $\sum_{x \in \mathcal{X}} \pi(x) =1$. We define $\|\pi\|_1$ and $\|\pi\|_\infty$ to be the quantities $\sum_{x \in \mathcal{X}} \pi(x)$ and $\max_{x \in \mathcal{X}} \pi(x)$ respectively. We also use $\|\cdot\|_{1}$ and $\|\cdot\|_{\infty}$ to denote the standard $\ell_{1}$ and $\ell_{\infty}$ norms on vectors and matrices. We consider the topology on the space $\mathsf{Prob}(\mathcal{X})$ induced by the distance $(\pi,\pi') \mapsto \| \pi - \pi' \|_1$ (which is the same topology as the one induced by the distance $(\pi,\pi') \mapsto \| \pi - \pi' \|_\infty$), which makes $\mathsf{Prob}(\mathcal{X})$ a compact metric space. Given a two distributions $\pi$ and $\pi'$ over $\mathcal{X}$, their Kullback-Leibler divergence (also known as relative entropy) is defined to be the quantity 
\[
D_{\mathrm{KL}} (\pi \| \pi') = \sum_{x \in \mathcal{X}} \pi(x) \log \frac{\pi(x)}{\pi'(x)}
\]
Which is a non-negative quantity and is equal to~$0$ if and only if $\pi=\pi'$. The Kullback-Liebler divergence relates to the $\|_1$-distance between distributions via the so-called Pinsker inequality:
\[
\| \pi-\pi'\|_1 \leq \sqrt{2 D_{\mathrm{KL}} (\pi \| \pi')}
\]

When $\pi$ and $\pi'$ are distributions over a product $\mathcal{X} \times \mathcal{Y}$ of finite sets, the conditional Kullback-Leibler divergence is the quantity
\begin{eqnarray*}
	D_{\mathrm{KL}} \big(\pi(\mathcal{Y}|\mathcal{X}) \| \pi'(\mathcal{Y}|\mathcal{X})) & = & \sum_{x \in \mathcal{X}} \pi(x)\cdot  D_{\mathrm{KL}} (\pi(.|x) \| \pi'(.|x))
	\\ 
	& = & \sum_{x \in \mathcal{X}} \pi(x) \sum_{y \in \mathcal{Y}} \pi(y|x) \log \frac{\pi(y|x)}{\pi'(y|x)}.
\end{eqnarray*}

In particular, we see from the first equality that if $x$ is a fixed element of~$\mathcal{X}$ with $\pi(x)>0$, one has by Pinsker's inequality
\[
\| \pi(.|x)-\pi'(.|x)\|_1 \leq \sqrt{2 D_{\mathrm{KL}} (\pi(.|x) \| \pi'(.|x))} \leq \sqrt{\frac{2 D_{\mathrm{KL}} (\pi(\mathcal{Y}|\mathcal{X}) \| \pi'(\mathcal{Y}|\mathcal{X}))}{\pi(x)}}.
\]

When $\pi$ is a distribution over a product $\mathcal{X} \times \mathcal{Y}$ of finite sets, 
the \emph{conditional entropy} of $\mathcal{X}$ given $\mathcal{Y}$ under $\pi$ is defined as
\[
H(\mathcal{X}_\pi\mid\mathcal{Y}_\pi)
= - \sum_{y \in \mathcal{Y}} \pi(y) \sum_{x \in \mathcal{X}} \pi(x \mid y) \log \pi(x \mid y).
\]
 An infinite sequence is normal if every finite block of symbols appears with the frequency expected under the uniform measure.

\begin{definition}[Normal sequence]
	A sequence $ X = x_0 x_1 x_2 \ldots \in \{0,1\}^\infty $ is \emph{normal} (in base~$2$) if, for every finite block $ w \in \{0,1\}^k $,
	\[
	\lim_{n \to \infty} \frac{\#\{\, 0 \le i < n : x_i x_{i+1} x_{i+2}..x_{i+k-1} = w \,\}}{n} = 2^{-k}.
	\]
	Equivalently, each $ k $-length block occurs in $ X $ with limiting frequency $ 2^{-k} $.
\end{definition}

A \emph{finite-state automaton} is a tuple $M = (Q, q_0, \delta)$, where $Q$ is a finite set of states, $q_0 \in Q$ is the initial state, and $\delta : Q \times \{0,1\} \to Q$ is a transition function.
A \emph{finite-state gambler} (or \emph{finite-state martingale}) is a finite automaton endowed with a betting function that specifies how the gambler bets on the next input symbol, given the finite prefix of the sequence observed so far.

\begin{definition}[Finite-state martingale \cite{Dai2004}]
	A \emph{finite-state martingale} (abbreviated \emph{FSM}) is a quadruple
	$D = (Q, q_0, \beta, \delta)$ where
	\begin{itemize}
		\item $Q$ is a finite set of states,
		\item $q_0 \in Q$ is the initial state,
		\item $\delta : Q \times \{0,1\} \to Q$ is the transition function, and
		\item $\beta : Q \to [0,1]$ assigns to each state $q$ a real number $\beta(q)$ representing the fraction of capital the gambler bets on the next symbol being $0$ (and $1-\beta(q)$ on the next symbol being $1$).
	\end{itemize}
	The capital of $D$ on a prefix $w = x_0 x_1 \cdots x_{n-1} \in \{0,1\}^*$ is defined recursively by
	\[
	D(\lambda) = 1, \qquad
	D(wx) =
	\begin{cases}
		D(w) \cdot 2\,\beta(q), & \text{if } x = 0,\\[3pt]
		D(w) \cdot 2\,(1-\beta(q)), & \text{if } x = 1,
	\end{cases}
	\]
	where $q$ is the state reached after reading $w$.
\end{definition}

\noindent
The factor~$2$ ensures that the expected capital remains constant under the uniform measure on~$\{0,1\}^\infty$.

\subsection{Finite-State Dimension}

Finite-state dimension quantifies the asymptotic rate of information in a sequence as perceived by finite automata~\cite{Dai2004}.  
It can be defined in terms of the exponential growth rate of finite-state martingales that succeed on~$X$.

\begin{definition}[Finite-state dimension and strong dimension~\cite{Dai2004,athreya2007effective}]
	For a sequence $ X \in \{0,1\}^\infty $, define
	\[
	\begin{aligned}
		\dim_{\mathrm{FS}}(X)
		&= \inf \bigl\{\, s \ge 0 : \exists \text{ FSM } D \text{ such that }
		\limsup_{n\to\infty} 2^{(s-1)n} D(X{\upharpoonright}n) = \infty \,\bigr\},\\[4pt]
		\Dim_{\mathrm{FS}}(X)
		&= \inf \bigl\{\, s \ge 0 : \exists \text{ FSM } D \text{ such that }
		\liminf_{n\to\infty} 2^{(s-1)n} D(X{\upharpoonright}n) = \infty \,\bigr\}.
	\end{aligned}
	\]
\end{definition}

Finite-state dimension was originally defined by Dai, Lathrop, Lutz, and Mayordomo~\cite{Dai2004} in terms of finite-state $s$-gales.  
In this paper, we adopt the equivalent martingale formulation, which is more convenient for our purposes.  
These notions satisfy $0 \le \dim_{\mathrm{FS}}(X) \le \Dim_{\mathrm{FS}}(X) \le 1$.  
An interesting special case arises when the finite-state dimension of a sequence equals one, which coincides exactly with the  normality~\cite{bourke2005entropy}.

\begin{theorem}[\cite{bourke2005entropy}]
	For every $ X \in \{0,1\}^\infty $,
	$
	\dim_{\mathrm{FS}}(X) = 1
	\text{ iff }
	X \text{ is normal.}
	$
\end{theorem}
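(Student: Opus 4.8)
The statement is a known result; the plan is to recover a self-contained argument by proving the two implications separately, in each case rephrasing the $s$-gale success condition as a bound on the exponential growth rate $\tfrac1n\log D(X{\upharpoonright}n)$ of a finite-state martingale $D$ and comparing that rate with the empirical block statistics of $X$.

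\emph{Non-normal $\Rightarrow\dimfs(X)<1$.} If $X$ is not normal, then for some block length $k$ the empirical distribution $\nu_n$ of the $2^k$ factors of length $k$ in $X{\upharpoonright}n$ fails to converge to the uniform distribution $U_k$ on $\{0,1\}^k$; by compactness of the probability simplex, some subsequence $\nu_{n_j}$ converges to a limit $\nu\neq U_k$, so $H(\nu)<k$. I would fix a full-support distribution $\mu$ on $\{0,1\}^k$ with cross-entropy $-\sum_w\nu(w)\log\mu(w)<k$ (a small perturbation of $\nu$ toward $U_k$ works), and let $D=\tfrac1k\sum_{r=0}^{k-1}D_r$, where $D_r$ partitions the positions into consecutive length-$k$ blocks starting at positions $\equiv r\pmod k$ and, on each such block $b$, multiplies its capital by $2^k\mu(b)$ (realizable by $k$ conditional per-symbol bets). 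Each $D_r$ is a finite-state martingale — its state records the offset modulo $k$ and the bits seen so far in the current block — hence so is the average $D$. Using $\log\bigl(\tfrac1k\sum_r D_r\bigr)\ge-\log k+\tfrac1k\sum_r\log D_r$ and the fact that, over $r=0,\dots,k-1$, the disjoint blocks cover every starting position exactly once, a direct computation gives
\[
\tfrac1n\log D(X{\upharpoonright}n)\ \ge\ 1-\tfrac1k\sum_{w\in\{0,1\}^k}\nu_n(w)\log\tfrac1{\mu(w)}-o(1),
\]
whose right-hand side tends, along $(n_j)$, to $c:=1-\tfrac1k\bigl(-\sum_w\nu(w)\log\mu(w)\bigr)>0$. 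Hence $\limsup_n 2^{(s-1)n}D(X{\upharpoonright}n)=\infty$ for every $s\in(1-c,1)$, so $\dimfs(X)\le1-c<1$. (This $D$ is, in essence, the winning gambler of the Schnorr–Stimm dichotomy.)

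\emph{Normal $\Rightarrow\dimfs(X)=1$.} Since $\dimfs(X)\le1$ always, it suffices to show $\limsup_n\tfrac1n\log D(X{\upharpoonright}n)\le0$ for every finite-state martingale $D=(Q,q_0,\beta,\delta)$, as then $2^{(s-1)n}D(X{\upharpoonright}n)$ stays bounded for every $s<1$, forcing $\dimfs(X)\ge1$. Fix $\ell\ge1$ and take $n=N\ell$ (the final partial block changes $\log D$ by at most $O(\ell)$). Decomposing the run of $D$ into the $N$ disjoint length-$\ell$ blocks, the multiplicative gain of $D$ on the $t$-th block depends only on the entry state $q_{t\ell}$ and on its $\ell$-bit content $w_t$; and since a length-$\ell$ block is itself a martingale under the uniform measure, for each entry state $p$ the map $w\mapsto2^{-\ell}\cdot(\text{gain of }D\text{ from }p\text{ on }w)$ is a probability distribution $\mu_p$ on $\{0,1\}^\ell$. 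Writing $\mu^\ast=\tfrac1{|Q|}\sum_p\mu_p$ and using $\mu_p(w)\le|Q|\,\mu^\ast(w)$,
\[
\tfrac1n\log D(X{\upharpoonright}n)\ \le\ 1+\tfrac{\log|Q|}{\ell}+\tfrac1\ell\sum_{w\in\{0,1\}^\ell}\widehat\nu_n(w)\log\mu^\ast(w)+o(1),
\]
where $\widehat\nu_n$ is the empirical distribution of the $N$ disjoint length-$\ell$ blocks of $X{\upharpoonright}n$. Since $X$ normal in base $2$ is normal in base $2^\ell$ (a classical fact), $\widehat\nu_n\to U_\ell$, and by concavity of $\log$, $\sum_w\widehat\nu_n(w)\log\mu^\ast(w)\to\sum_w2^{-\ell}\log\mu^\ast(w)\le\log\bigl(\sum_w2^{-\ell}\mu^\ast(w)\bigr)=-\ell$. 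Thus $\limsup_n\tfrac1n\log D(X{\upharpoonright}n)\le\log|Q|/\ell$ for every $\ell$, and letting $\ell\to\infty$ gives the claim.

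\emph{Main obstacle.} Both halves rest on passing cleanly between the block decomposition of a finite-state martingale's capital and the empirical block statistics of $X$; setting this up and discharging the $o(1)$ terms (boundary effects at block ends, the $-\log k$ and $\log|Q|/\ell$ slack, and degenerate bets $\beta(q)\in\{0,1\}$, which only make $D$ smaller) is routine but fiddly. The one genuinely external ingredient is the classical equivalence between normality in base $2$ and normality in base $2^\ell$ for all $\ell$ — equivalently, the equidistribution of disjoint length-$\ell$ blocks of a normal sequence. This serves as the natural substitute for the sharper claim that a deterministic automaton's current state is asymptotically balanced between the two symbols along a normal sequence, which is true but is essentially Agafonov's theorem itself.
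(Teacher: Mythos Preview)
The paper does not give its own proof of this theorem; it is stated as a known result with a citation to Bourke--Hitchcock--Vinodchandran. So there is nothing to compare against, and your proposal should be assessed on its own terms. Both directions follow the classical Schnorr--Stimm/entropy-rate line, and the second direction (normal $\Rightarrow\dimfs(X)=1$) is correct as written: the key inequality $\mu_p(w)\le|Q|\,\mu^\ast(w)$ together with base-$2^\ell$ normality of $X$ and Jensen gives $\limsup_n\tfrac1n\log D(X{\upharpoonright}n)\le\log|Q|/\ell\to0$, and the degenerate-bet edge cases are handled as you note.

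There is, however, a genuine gap in the first direction. You form $D=\tfrac1k\sum_{r}D_r$ and assert ``hence so is the average $D$'' a finite-state martingale. That is false in general: in the paper's FSM model the bet depends only on the current state, i.e.\ the ratio $D(wx)/D(w)$ must be a function of a finite state reached after $w$. For a sum one gets
\[
\frac{D(wx)}{D(w)}=\frac{\sum_r D_r(w)\cdot 2\beta_r(q_r(w),x)}{\sum_r D_r(w)},
\]
which depends on the relative capitals $D_r(w)$, not just on the tuple of states $(q_r(w))_r$; a direct two-block example with $k=2$ and a non-product $\mu$ already shows this ratio varies over prefixes with the same state tuple. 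Two easy fixes are available. First, you do not actually need the average: from $\tfrac1k\sum_r\tfrac1n\log D_r(X{\upharpoonright}n)\ge c-o(1)$ along $(n_j)$ one gets $\max_r\tfrac1{n_j}\log D_r(X{\upharpoonright}n_j)\ge c-o(1)$, and since there are only $k$ offsets some fixed $r$ achieves this infinitely often; that single $D_r$ is an honest FSM witnessing $\dimfs(X)\le1-c$. Second, you can treat $(D_0,\ldots,D_{k-1})$ as a $k$-account finite-state gambler and invoke the near-simulation by a single-account FSM (Dai--Lathrop--Lutz--Mayordomo, Theorem~4.5, which the present paper also uses in the proof of Theorem~\ref{thm:markovchainstrongfsd}). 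Either route closes the gap without altering your overall plan.
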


\subsection{Finite-State Selectors}

A \emph{finite-state selector} is a deterministic finite automaton that, while reading an input sequence, outputs a subsequence consisting of the symbols read in designated ``selecting'' states.

\begin{definition}[Finite-state selector \cite{Agafonov1968}]
	A \emph{finite-state selector} is a tuple
	$\mathcal{S} = (Q, q_0, \delta, S)$ where
	$Q$ is a finite set of states,
	$q_0 \in Q$ is the start state,
	$\delta : Q \times \{0,1\} \to Q$ is the transition function,
	and $S \subseteq Q$ is the set of selecting states.
	Given $X = x_0 x_1 x_2 \cdots \in \{0,1\}^\infty$, the subsequence $\mathcal{S}(X)$ is obtained by outputting $x_i$ whenever the automaton is in a state $q_i \in S$ upon reading $x_i$.
\end{definition}

For a selector $\mathcal{S} = (Q, q_0, \delta, S)$, the \emph{complementary selector} $\mathcal{S}^c = (Q, q_0, \delta, Q \setminus S)$ selects exactly the symbols read in the nonselecting states.

\subsection{Markov Chains}
\label{sec:preliminariesmarkovchains}

A \emph{finite-state Markov chain} \cite{Chung1967} is a pair $(Q,P)$, where $Q$ is a finite set of states and $P = (p_{ij})_{i,j \in Q}$ is a stochastic transition matrix satisfying $p_{ij} \ge 0$ and $\sum_{j} p_{ij} = 1$ for all $i \in Q$.We call a finite-state Markov chain \emph{fair} if, from every state, there exist exactly two outgoing transitions with positive probability, each occurring with probability~$\tfrac{1}{2}$. For a probability vector $\pi$ on $Q$, $\pi$ is a \emph{stationary distribution} of $P$ if $\pi P = \pi$. For $ q,q' \in Q $, write $ q \to q' $ if there exists a finite path from $ q $ to $ q' $
with positive probability, and define $ q \leftrightarrow q' $ iff $ q \to q' $ and $ q' \to q $.
The equivalence classes of $ \leftrightarrow $ are the \emph{communication classes} of $ M $.

\begin{definition}[Ergodic set \cite{Chung1967}]
	Let $ M = (Q,P) $ be a finite-state Markov chain.  
	A subset $ E \subseteq Q $ is an \emph{ergodic set of $ M $} if  
	(i) for all $ q,q' \in E $, $ q \leftrightarrow q' $, and  
	(ii) there does not exist any $ q' \in Q \setminus E $ and $ q \in E $ such that $ q \to q' $.
\end{definition}

\begin{definition}[Irreducible Markov chain \cite{Chung1967}]
	A finite-state Markov chain $ M = (Q,P) $ is \emph{irreducible}
	if $M$ has a unique ergodic set $ E_M \subseteq Q $ and  
	for every $ q \in Q $, there exists some $ q' \in E_M $ with $ q \to q' $.\footnote{ In the standard terminology, a finite-state Markov chain is called irreducible if all its states belong to a single ergodic set. Here we adopt a slightly relaxed notion that allows a finite sequence of transient states before the chain enters its unique ergodic set.}
\end{definition}

Given a finite-state martingale or selector, we associate with it a Markov chain that captures only its state transitions under binary inputs.

\begin{definition}[Induced Markov chain]
	Let $ A = (Q, q_0, \delta) $ be a finite-state automaton (the case of interest being those underlying martingales or selectors).  
	The \emph{induced Markov chain} of $ A $ is $ M_A = (Q,q_0, E, P_A) $, where $ P_A $ is the stochastic matrix defined as follows:  
	for each state $ q \in Q $, there are two outgoing transitions labeled by $ 0 $ and $ 1 $, given by  
	$
	(q, 0, \delta(q,0)) \text{ and } (q, 1, \delta(q,1)),
	$
	each occurring with probability $ \tfrac{1}{2} $. The labels $ 0 $ and $ 1 $ correspond to the input symbols read by the automaton, and the transition $ (q,b,\delta(q,b)) $ represents moving from state $ q $ to $ \delta(q,b) $ upon reading symbol $ b $. $E$ denotes the set of all such transitions of $M_A$.
\end{definition}

\noindent
Let $\mathcal{A}$ be the set of all finite-state automata, and we define $\mathcal{M} = \{ M_A : A \in \mathcal{A} \text{ and } M_A \text{ is irreducible}.\}$ 
Throughout the paper, we work exclusively with the set $\mathcal{M}$ of irreducible Markov chains \emph{induced} by finite-state automata. For a state $q\in Q$, let $\mathsf{Out}(q)=\{(q,0,\delta(q,0)),(q,1,\delta(q,1))\}$ denote the set of outgoing transitions from~$q$. 
Given a sequence $X = x_0x_1x_2\cdots \in \{0,1\}^\infty$, the \emph{run of $M$ on $X$} is the sequence of states $q_0,q_1,q_2,\ldots$ defined by $q_{i+1}=\delta(q_i,x_i)$ for each $i\ge0$, starting from $q_0$. 
For each $ n \in \mathbb{N} $, let $ P_n(Q=q,E=e) $ denote the \emph{empirical joint probability} that, during the first $ n $ steps of the run of $ M $ on $ X $, the automaton is in state $ q $ and the next transition taken is $ e $:~\footnote{Note that if $ e \notin \mathsf{Out}(q) $, then $ P_n(Q=q,E=e)=0 $.}
\[
P_n(Q=q,E=e)
=\frac{1}{n}\bigl|\{\,0\le i<n : q_i=q\text{ and }(q_i,x_i,q_{i+1})=e\,\}\bigr|.
\]
The corresponding marginal distribution on states is 
$ Q_n(q) = \sum_{e \in \mathsf{Out}(q)} P_n(Q=q,E=e) $, 
and the conditional distribution of transitions given the current state is 
$ P_n(E=e \mid Q=q) = \tfrac{P_n(Q=q,E=e)}{Q_n(q)} $
whenever $ Q_n(q) > 0 $. We now define the set of limiting distributions associated with the run of $ M $ on $ X $.

\begin{definition}[Limiting distributions] The set of all limiting joint distributions over $ (Q,E) $ observed along the run of $ M $ on $ X $, denoted by $\mathcal{W}_M(X)$, is the closure of the sequence $(P_n)$ in the space of measures over $ Q\times E $, or, equivalently \[ \mathcal{W}_M(X) = \bigl\{\, \mu \in \mathsf{Prob}(Q\times E) : P_n \to \mu \text{ along some subsequence } n_k \bigr\}, \] where $ \mathsf{Prob}(Q\times E) $ denotes the space of probability measures on $ Q\times E $. \end{definition}
\noindent
Each such $ \mu $ satisfies $ \sum_{e \in \mathsf{Out}(q)} \mu(E=e \mid Q=q) = 1 $ for every $ q \in Q $ since they arise as limits of $P_n$.

\subsection{Irreducible Finite-State Martingales and Selectors}
\label{sec:irreducible_models}

We now consider \emph{irreducible} finite-state martingales and selectors, whose induced Markov chains are irreducible.  
These irreducible versions play a key technical role throughout the paper and we show that restricting attention to irreducible martingales or selectors does not alter the definitions or expressive power of the corresponding notions.

\begin{definition}[Irreducible finite-state martingale]
	A finite-state martingale $ D = (Q, q_0, \beta, \delta) $ is \emph{irreducible}
	if the Markov chain $ M_D = (Q, P_D) $ induced by its transition function $ \delta $
	is irreducible.  
\end{definition}
\begin{definition}[Irreducible finite-state dimension]
	The \emph{irreducible finite-state dimension} of a sequence $ X \in \{0,1\}^\infty $,
	denoted $ \dim_{\mathrm{FS}}^{\mathrm{irr}}(X) $,
	is the finite-state dimension obtained by restricting the definition
	of $ \dim_{\mathrm{FS}}(X) $ to irreducible finite-state martingales.
\end{definition}

\begin{lemma}
	\label{lem:irr_equiv}
	For every $ X \in \{0,1\}^\infty $,
	$
	\dim_{\mathrm{FS}}^{\mathrm{irr}}(X)
	= \dim_{\mathrm{FS}}(X).
	$
\end{lemma}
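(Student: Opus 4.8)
The inequality $\dim_{\mathrm{FS}}^{\mathrm{irr}}(X)\ge\dim_{\mathrm{FS}}(X)$ is immediate: every irreducible FSM is in particular an FSM, so the infimum defining $\dim_{\mathrm{FS}}^{\mathrm{irr}}(X)$ is taken over a (possibly) smaller collection of witnessing martingales. The content is therefore the reverse inequality, and the plan is to show that any FSM that succeeds on $X$ can be turned, \emph{in an $X$-dependent way}, into an irreducible FSM that behaves identically on $X$.

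Concretely, I would fix $s>\dim_{\mathrm{FS}}(X)$, take an FSM $D=(Q,q_0,\beta,\delta)$ with $\limsup_{n}2^{(s-1)n}D(X{\upharpoonright}n)=\infty$, let $q_0,q_1,q_2,\ldots$ be the run of $D$ on $X$, let $R\subseteq Q$ be the (nonempty) set of states occurring infinitely often in that run, and fix some $r_0\in R$. Call a transition $(q,b)$ \emph{used} if $q_i=q$ and $x_i=b$ for some $i$, and define $D'=(Q,q_0,\beta,\delta')$ by $\delta'(q,b)=\delta(q,b)$ when $(q,b)$ is used and $\delta'(q,b)=r_0$ otherwise. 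Since $D'$ agrees with $D$ on $\beta$ and on every transition that the run of $D$ on $X$ actually takes, the run of $D'$ on $X$ coincides with that of $D$, hence $D'(X{\upharpoonright}n)=D(X{\upharpoonright}n)$ for every $n$ and in particular $\limsup_{n}2^{(s-1)n}D'(X{\upharpoonright}n)=\infty$. Letting $s\downarrow\dim_{\mathrm{FS}}(X)$ would then give $\dim_{\mathrm{FS}}^{\mathrm{irr}}(X)\le\dim_{\mathrm{FS}}(X)$, \emph{provided $D'$ is irreducible}.

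The core of the argument is verifying irreducibility of $D'$. First I would note that $R$ is strongly connected in $D'$: choosing a cutoff $N$ large enough that after time $N$ the run no longer visits states outside $R$ and no longer uses transitions that are used only finitely often, one sees that for $q,q'\in R$, following the run from an occurrence of $q$ at some time $\ge N$ up to the next occurrence of $q'$ yields a path from $q$ to $q'$ whose transitions are all used infinitely often, hence preserved by $\delta'$, and which stays inside $R$. Second, every state of $D'$ can reach $r_0$: states of $R$ can by the previous point; a state visited only finitely often reaches $R$ by following the (preserved) run-path from one of its occurrences until the run enters $R$; and a never-visited state maps directly to $r_0$ under $\delta'$. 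Since a finite chain always has at least one ergodic set, and since any ergodic set, being closed and reachable-into by $r_0$ from everywhere, must contain $r_0$, it follows that $D'$ has exactly one ergodic set (the communication class of $r_0$), and it is reachable from every state — i.e., $D'$ is irreducible in the sense of the paper.

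I expect the main obstacle to be precisely this last verification, together with the temptation to over-modify $D$. It is tempting to also reroute the transitions the run uses only finitely often in order to make $R$ itself closed, but this changes the run of $D'$ on $X$ and hence destroys the identity $D'(X{\upharpoonright}n)=D(X{\upharpoonright}n)$; the argument above sidesteps this by redirecting \emph{only} never-used transitions and then establishing irreducibility through reachability of $r_0$ rather than through closedness of $R$. (The same construction applies verbatim with $\liminf$ in place of $\limsup$, which will also be useful for $\Dim_{\mathrm{FS}}$ and for the selector versions.)
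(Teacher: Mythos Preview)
Your proof is correct. You and the paper follow the same outline---the trivial direction, then convert a witnessing FSM $D$ into an irreducible $D'$ with identical capital on $X$---but the constructions of $D'$ differ. The paper asserts that the run of $D$ on $X$ eventually enters one of the ergodic sets $E_D$ of the induced Markov chain, and builds $D'$ as a chain of fresh transient states reading the corresponding prefix followed by $D$ restricted to $E_D$. You instead keep the state set and reroute only the \emph{never-used} transitions to a fixed infinitely-visited state $r_0$, then argue irreducibility via universal reachability of $r_0$. Your route is in fact the more careful one: the paper's claim that the deterministic run on a specific $X$ must enter an ergodic set of $M_D$ is not correct as stated (the run can cycle forever inside a non-closed communication class---think of a two-state loop with an exit edge to an absorbing state that $X$ happens never to trigger), so the paper's construction tacitly needs a patch of exactly the kind you provide. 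What the paper's version buys, when the run does reach an ergodic set, is a slightly smaller $D'$; what yours buys is an argument that works uniformly for every $D$ and $X$, and that transfers verbatim to the $\liminf$ and selector settings as you note.
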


\begin{proof}
	It trivially follows that $\dim_{\mathrm{FS}}^{\mathrm{irr}}(X)
	\ge \dim_{\mathrm{FS}}(X)$. We show the converse. Let $ s > \dim_{\mathrm{FS}}(X) $.
	Then there exists a finite-state martingale $ D = (Q,q_0,\beta,\delta) $
	such that $ D(X{\upharpoonright}n) \ge 2^{(1-s)n} $ for infinitely many $ n $.
	During its run on $ X $, $ D $ eventually enters one of its ergodic sets,
	say $ E_D $, after reading some finite prefix $ w $, reaching a state $ q_w \in E_D $.
	
	We now construct an irreducible martingale $ D' = (Q',q_0',\beta',\delta') $
	that behaves exactly like $ D $ after processing the prefix $ w $.
	The new automaton $ D' $ begins in a fresh start state $ q_0' $
	and has a short initial path of transient states that simply reads the string $ w $.
	Along this prefix, $ D' $ follows the same sequence of transitions
	and places the same bets that $ D $ would make on $ w $.
	After the last symbol of $ w $ is read, $ D' $ enters the state $ q_w $
	and thereafter proceeds identically to $ D $, using the same betting and transition functions restricted to $ E_D $.
	
	The underlying Markov chain of $ D' $ is irreducible, since its only ergodic set is $ E_D $,
	and for all $ n \ge |w| $,
	$
	D'(X{\upharpoonright}n) = D(X{\upharpoonright}n).
	$
	Hence $ D' $ succeeds on $ X $ at the same rate as $ D $,
	and the two dimensions coincide.
\end{proof}

\noindent
Next, we define the corresponding notion of irreducibility for finite-state selectors, which will be crucial in our analysis of selection processes.

\begin{definition}[Irreducible selector]
	A finite-state selector $ \mathcal{S} = (Q, q_0, \delta, S) $ is \emph{irreducible}
	if the Markov chain $ M_{\mathcal{S}} = (Q, P_{\mathcal{S}}) $ induced by its transition function $ \delta $
	is irreducible.  
\end{definition}

\begin{lemma}
	\label{lem:irr_selector_equiv}
	For every sequence $ X \in \{0,1\}^\infty $ and every finite-state selector $ \mathcal{S} $
	that selects a subsequence $ \mathcal{S}(X) $ of $ X $,
	there exists an irreducible selector $ \mathcal{S}' $
	that selects exactly the same subsequence from $ X $.
\end{lemma}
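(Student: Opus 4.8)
The plan is to adapt, to the selector setting, the construction used in the proof of Lemma~\ref{lem:irr_equiv}, replacing the bookkeeping of capital by the bookkeeping of which input positions get selected. Write $\mathcal{S}=(Q,q_0,\delta,S)$ and let $q_0,q_1,q_2,\dots$ be the run of $\mathcal{S}$ on $X=x_0x_1x_2\cdots$. Let $R\subseteq Q$ be the set of states visited infinitely often along this run. Since $Q$ is finite there is an index $m$ with $q_i\in R$ for every $i\ge m$, and every state of $R$ is visited infinitely often at steps $\ge m$. Put $w=x_0\cdots x_{m-1}$ and $q_w=q_m\in R$.

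I would then build $\mathcal{S}'=(Q',q_0',\delta',S')$ by prepending to $R$ a fresh ``lead-in'' path $p_0,\dots,p_{m-1}$ whose job is to read $w$. Take $Q'=\{p_0,\dots,p_{m-1}\}\sqcup R$ and $q_0'=p_0$ (if $m=0$, simply let $\mathcal{S}'$ start in $q_w$). Define $\delta'(p_i,x_i)=p_{i+1}$ for $i<m-1$, $\delta'(p_{m-1},x_{m-1})=q_w$, and send the other symbol from each $p_i$ to $q_w$ as well; on $R$, set $\delta'(q,b)=\delta(q,b)$ whenever $\delta(q,b)\in R$ and $\delta'(q,b)=q_w$ otherwise. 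Let $S'=\{\,p_i : q_i\in S\,\}\cup(S\cap R)$, so that each lead-in state inherits the selection status of the corresponding state in the run of $\mathcal{S}$ on $w$. On input $X$ the run of $\mathcal{S}'$ is $p_0,\dots,p_{m-1},q_m,q_{m+1},\dots$, because after step $m$ every transition taken by the run of $\mathcal{S}$ has target in $R$ and is therefore retained by $\delta'$; hence $\mathcal{S}'$ follows the states of $\mathcal{S}$ exactly from step $m$ on. Moreover position $i$ is output by $\mathcal{S}'$ iff the $i$-th state of this run lies in $S'$, which holds iff $q_i\in S$ for $i<m$ (by the choice of $S'$) and iff $q_i\in S\cap R$, i.e.\ $q_i\in S$, for $i\ge m$; in both ranges $\mathcal{S}'$ selects exactly the positions $\mathcal{S}$ does, so $\mathcal{S}'(X)=\mathcal{S}(X)$.

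It then remains to verify that $M_{\mathcal{S}'}$ is irreducible. By construction $\delta'$ keeps $R$ absorbing, and $R$ is a single communication class of $M_{\mathcal{S}'}$: given $q,q'\in R$, since both are visited infinitely often at steps $\ge m$, a finite segment of the run of $\mathcal{S}$ on $X$ gives a positive-probability path from $q$ to $q'$ through states of $R$ only, using exactly the transitions retained by $\delta'$; hence $q\to q'$, and symmetrically $q'\to q$. Thus $R$ is an ergodic set of $M_{\mathcal{S}'}$, and since each $p_i$ reaches $q_w\in R$ it is the unique ergodic set and is reachable from every state, so $\mathcal{S}'$ is irreducible. (As in Lemma~\ref{lem:irr_equiv}, $\mathcal{S}'$ is allowed to depend on $X$, which is unavoidable, since $\mathcal{S}$ may have several ergodic sets into which different inputs steer it.)

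I do not anticipate a genuine obstacle: the construction is essentially the one already used for martingales, and the verification is bookkeeping. The one point demanding care — and where the facile assertion ``the run of $\mathcal{S}$ eventually enters an ergodic set of $\mathcal{S}$'' is actually false in general — is that the run need only settle into the recurrent set $R$, which need not itself be an ergodic set of $M_{\mathcal{S}}$; rerouting the finitely many transitions that leave $R$ back into $R$ (harmless, since they are never taken by $X$ at steps $\ge m$) is precisely what promotes $R$ to a genuine ergodic set of the new chain, and this is the step I would write out most carefully.
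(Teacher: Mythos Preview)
Your proof is correct and follows the same skeleton as the paper's---prepend a fresh lead-in path that reads a finite prefix of~$X$ and then hand control to a ``core'' set of states---but you are more careful than the paper at exactly the point you flag. The paper's proof asserts that the run of $M_{\mathcal{S}}$ on~$X$ eventually enters one of its ergodic components~$E_{\mathcal{S}}$ and then restricts $\delta$ to~$E_{\mathcal{S}}$; as you observe, this assertion can fail (e.g.\ a four-state automaton with two absorbing sinks $c,d$ and transient states $a,b$ satisfying $\delta(a,0)=b$, $\delta(b,0)=a$, $\delta(a,1)=c$, $\delta(b,1)=d$: on $X=000\cdots$ the run is $a,b,a,b,\dots$, never entering either ergodic set). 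Your remedy---take $R$ to be the set of states visited infinitely often, and reroute the finitely many $R$-leaving transitions back to $q_w$---is precisely what is needed to promote $R$ to a genuine ergodic set of the new chain while leaving the run on~$X$ unchanged, and this step is absent from the paper's argument. So your version is not merely a stylistic variant: it patches a real gap.
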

\begin{proof}
	The forward implication is immediate:
	if an irreducible selector $ \mathcal{S}' $ selects a subsequence of $ X $,
	then so does $ \mathcal{S} = \mathcal{S}' $ itself.
	
	For the converse, let $ \mathcal{S} = (Q, q_0, \delta, S) $ be a (possibly reducible) selector
	that selects $ \mathcal{S}(X) $.
	During its run on $ X $, the underlying Markov chain $ M_{\mathcal{S}} = (Q, P_{\mathcal{S}}) $
	eventually enters one of its ergodic components, say $ E_{\mathcal{S}} $,
	after reading some finite prefix $ w \in \{0,1\}^* $.
	Let $ q_w $ denote the state reached after processing $ w $;
	from that point onward the run of $ \mathcal{S} $ remains entirely within $ E_{\mathcal{S}} $.
	
	We now construct an irreducible selector
	$ \mathcal{S}' = (Q', q_0', \delta', S') $
	that behaves identically to $ \mathcal{S} $ on $ X $ from that point onward.
	The new selector $ \mathcal{S}' $ begins in a fresh start state $ q_0' $
	and has a finite chain of transient states that simply reads the prefix $ w $
	and transitions to $ q_w $ in $ E_{\mathcal{S}} $.
	Along this prefix, $ \mathcal{S}' $ reproduces exactly the same sequence of selections
	that $ \mathcal{S} $ makes on $ w $.
	Once $ q_w $ is reached, the selector proceeds using the transition and selection structure
	of $ \mathcal{S} $ restricted to $ E_{\mathcal{S}} $.
	
	The underlying Markov chain of $ \mathcal{S}' $ is therefore irreducible,
	since its only ergodic component is $ E_{\mathcal{S}} $,
	and for all $ n \ge |w| $,
	$ \mathcal{S}'(X{\upharpoonright}n) = \mathcal{S}(X{\upharpoonright}n) $.
	Hence $ \mathcal{S}' $ selects exactly the same subsequence from $ X $,
	and the claim follows.
\end{proof}

\noindent
Thus, in both the gambling and selection settings, the restriction to irreducible finite-state automata entails no loss of generality, allowing us to work exclusively with irreducible models in subsequent sections.

\section{A Markov Chain Characterization of Finite-State Dimension}
\label{sec:markovchainfsd}

In this section, we establish a characterization of finite-state dimension in terms of Markov chains.  
Specifically, we express the finite-state dimension of a sequence~$X$ as a divergence-based quantity determined by the limiting empirical behavior of the irreducible Markov chains from the set~$\mathcal{M}$ when run on~$X$ (see Section~\ref{sec:preliminariesmarkovchains}).

\begin{theorem}
	\label{thm:markovchainfsd}
	For every $X \in \{0,1\}^\infty$,
	\[\dim_{\mathrm{FS}}(X)= 1 - 
	\sup_{M \in \mathcal{M}}\sup_{\mu \in \mathcal{W}_M(X)}
	D_{\mathrm{KL}}\!\big(\mu(E\mid Q) \,\|\, \pi_M(E \mid Q) \big).\]
\end{theorem}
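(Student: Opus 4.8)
The plan is to prove the two inequalities separately, connecting the martingale definition of finite-state dimension with the divergence quantity via the natural correspondence between fair finite-state martingales and the transition probabilities they implicitly assign.

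\textbf{Upper bound on dimension (constructing a good martingale).} Fix any $M = (Q,q_0,E,P) \in \mathcal{M}$ and any $\mu \in \mathcal{W}_M(X)$, with $P_{n_k} \to \mu$ along a subsequence. I would build a finite-state martingale $D_M$ whose underlying automaton is exactly the automaton of $M$, and whose betting function in state $q$ is tilted toward the empirical conditional $\mu(E \mid Q=q)$: concretely, $\beta(q) = \mu\big((q,0,\delta(q,0)) \mid Q=q\big)$. Unwinding the capital recursion, after reading $X{\upharpoonright}n$ with run $q_0,\dots,q_{n-1}$ and symbols $x_0,\dots,x_{n-1}$, one has $\log D_M(X{\upharpoonright}n) = \sum_{i<n} \log\big(2\,\beta_{x_i}(q_i)\big) = n + \sum_{i<n}\log \mu(e_i \mid q_i)$ where $e_i = (q_i,x_i,q_{i+1})$. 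Grouping by $(q,e)$ and using $\log 2$ as the base-$2$ unit, this equals $n\big(1 + \sum_{q,e} P_n(Q=q,E=e)\log \mu(e\mid q)\big)$. Along $n_k$, $P_{n_k}(Q=q,E=e) \to \mu(Q=q,E=e)$, so $\tfrac1{n_k}\log D_M(X{\upharpoonright}{n_k}) \to 1 + \sum_{q,e}\mu(q,e)\log\mu(e\mid q) = 1 - H(E_\mu \mid Q_\mu)$ (base $2$). Since $M$ is fair, $\pi_M(E=e \mid Q=q) = \tfrac12$ for the two outgoing transitions, so $D_{\mathrm{KL}}(\mu(E\mid Q)\|\pi_M(E\mid Q)) = 1 - H(E_\mu\mid Q_\mu)$. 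Hence $\limsup_n 2^{(s-1)n}D_M(X{\upharpoonright}n) = \infty$ for any $s > 1 - D_{\mathrm{KL}}(\mu(E\mid Q)\|\pi_M(E\mid Q))$, giving $\dim_{\mathrm{FS}}(X) \le 1 - D_{\mathrm{KL}}(\cdots)$; taking suprema over $M$ and $\mu$ yields $\le$.

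\textbf{Lower bound on dimension (any martingale is dominated by the divergence).} For the reverse inequality, let $D = (Q,q_0,\beta,\delta)$ be any irreducible finite-state martingale (legitimate by Lemma~\ref{lem:irr_equiv}) that succeeds on $X$ at rate $s$, i.e. $\limsup_n 2^{(s-1)n}D(X{\upharpoonright}n) = \infty$; this forces $\limsup_n \tfrac1n\log D(X{\upharpoonright}n) \ge 1-s$. Let $M = M_D \in \mathcal{M}$ be its induced chain, and pick $n_k$ realizing the $\limsup$ and (by compactness of $\mathsf{Prob}(Q\times E)$, passing to a further subsequence) with $P_{n_k}\to\mu$ for some $\mu\in\mathcal{W}_M(X)$. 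Writing $\beta_b(q)$ for the bet placed on symbol $b$ in state $q$, the same computation as above gives $\tfrac1n\log D(X{\upharpoonright}n) = 1 + \sum_{q,e}P_n(Q{=}q,E{=}e)\log\big(2\beta_{b(e)}(q)\big)\cdot\tfrac1{\log 2}$ — more carefully, $\tfrac1n\log D(X{\upharpoonright}n) = \sum_q Q_n(q)\sum_{e\in\mathsf{Out}(q)}P_n(E{=}e\mid Q{=}q)\log\big(2\beta_{b(e)}(q)\big)$. For fixed $q$ with $Q_n(q)>0$, the inner sum is $\sum_e \nu_q(e)\log\big(\tfrac{\nu_q(e)}{(1/2)}\cdot\tfrac{1}{\nu_q(e)/\beta_{b(e)}(q)\cdot(1/2)}\big)$ — the clean statement is the elementary inequality $\sum_e \nu_q(e)\log\big(2\beta_{b(e)}(q)\big) \le \sum_e \nu_q(e)\log\big(2\nu_q(e)\big) = D_{\mathrm{KL}}\big(\nu_q \| \pi_M(\cdot\mid q)\big)$, which holds for any probability vector $\nu_q$ on the two transitions of $q$ because $\sum_e \nu_q(e)\log\tfrac{\beta_{b(e)}(q)}{\nu_q(e)} \le \log\sum_e \beta_{b(e)}(q) \le 0$ by Jensen (concavity of $\log$) and $\beta_0(q)+\beta_1(q) = 1$. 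Summing over $q$ with weights $Q_n(q)$ gives $\tfrac1n\log D(X{\upharpoonright}n) \le D_{\mathrm{KL}}\big(P_n(E\mid Q)\|\pi_M(E\mid Q)\big)$ for all $n$. Taking $n=n_k\to\infty$ and using continuity of the conditional KL divergence at $\mu$, we get $1-s \le \limsup_k \tfrac1{n_k}\log D(X{\upharpoonright}{n_k}) \le D_{\mathrm{KL}}(\mu(E\mid Q)\|\pi_M(E\mid Q)) \le \sup_{M,\mu}D_{\mathrm{KL}}(\cdots)$, so $s \ge 1 - \sup_{M,\mu}D_{\mathrm{KL}}(\cdots)$; taking the infimum over such $s$ finishes it.

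\textbf{Anticipated obstacles.} The one genuinely delicate point is the handling of states or transitions with zero (or vanishing) empirical frequency and the continuity of the conditional KL divergence $\mu \mapsto D_{\mathrm{KL}}(\mu(E\mid Q)\|\pi_M(E\mid Q))$ on the boundary of the simplex. Because every outgoing transition has $\pi_M$-probability exactly $\tfrac12 > 0$, the divergence $D_{\mathrm{KL}}(\mu(\cdot\mid q)\|\pi_M(\cdot\mid q))$ extends continuously (it is bounded by $\log 2$ and the term $\mu(e\mid q)\log(2\mu(e\mid q))$ vanishes as $\mu(e\mid q)\to 0$), and the outer weighting by $\mu(Q=q)$ is continuous; but one should be careful that $\mathcal{W}_M(X)$ is defined via limits of the \emph{joint} $P_n$, and that the conditional $P_n(E\mid Q)$ need not converge when $Q_n(q)\to 0$ — however that case contributes zero weight in the limit, so the joint-distribution formulation $\mu(q,e)$ sidesteps it. A second, minor point is to justify that in the lower-bound argument one may assume $D$ is irreducible and that the induced chain $M_D$ lies in $\mathcal{M}$; this is exactly Lemma~\ref{lem:irr_equiv}. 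I would also note the analogous statement for $\Dim_{\mathrm{FS}}$ follows by replacing $\limsup$ with $\liminf$ throughout and choosing $n_k$ to realize the relevant $\liminf$.
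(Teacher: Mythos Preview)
Your proof is correct and follows essentially the same two-direction strategy as the paper: in one direction you build a martingale on the automaton of $M$ betting according to $\mu(E\mid Q)$ and compute its log-capital along the subsequence $n_k$; in the other you take an irreducible FSM $D$, pass to $M_D\in\mathcal{M}$, and bound the log-capital by $D_{\mathrm{KL}}(P_n(E\mid Q)\|\pi_M(E\mid Q))$ before taking a cluster point. Your converse argument is in fact slightly more careful than the paper's: you explicitly invoke the Gibbs/Jensen inequality $\sum_e \nu_q(e)\log(2\beta_{b(e)}(q))\le \sum_e \nu_q(e)\log(2\nu_q(e))$, whereas the paper writes the corresponding capital formula as an equality in $P_n$ (which is only correct as the inequality you state, though their argument still goes through).
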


\begin{proof}
	We first show that $\dim_{\mathrm{FS}}$ is less than or equal to the term on the right
	and then establish the converse inequality.

	\textit{(Forward direction.)}
	Let $ s $ be any real number strictly greater than 
	\[
	1 - 
	\sup_{M \in \mathcal{M}}\sup_{\mu \in \mathcal{W}_M(X)}
	D_{\mathrm{KL}}\!\big(\mu(E\mid Q) \,\|\, \pi_M(E \mid Q) \big)
	\]
	We will show that $ \dim_{\mathrm{FS}}(X) \le s $. Consider an irreducible Markov chain $ M $ and a limiting distribution 
	$ \mu \in \mathcal{W}_M(X) $ such that 
	$
	1 - D_{\mathrm{KL}}\!\big(\mu(E\mid Q) \,\|\, \pi_M(E \mid Q) \big) < s.
	$
	Define a martingale $ D $ that, when the automaton is in state $ q $, bets according to the conditional distribution $ \mu(E=e\mid Q=q) $. For any distribution $P$ on $E \times Q$ we use $P(e \mid q)$ to denote $P(E=e \mid Q=q)$. For the run of $M$ on $X$, recall the definition of the empirical joint distribution $P_n$ and its corresponding marginal distribution $Q_n$ from section \ref{sec:preliminaries}.
	
	Now choose a subsequence $ \langle n_m\rangle $ along which $ P_{n_m} \to \mu $.
	The capital of $ D $ after reading $ X{\upharpoonright} n_m $ is
	\[
	D(X{\upharpoonright} n_m)
	= 2^{\,n_m}
	\prod_{q \in Q}\prod_{e \in E}
	\bigl(\mu(E=e\mid Q=q)\bigr)^{P_{n_m}(E=e\mid Q=q)\,Q_{n_m}(q)}.
	\]
	
	Taking logarithms gives
	\[
	\log D(X{\upharpoonright} n_m)
	= n_m - n_m \sum_{q \in Q} Q_{n_m}(q)
	\sum_{e \in E} P_{n_m}(e\mid q)
	\log \frac{1}{\mu(E=e\mid Q=q)}.
	\]
	For sufficiently large $ n_m $, the empirical distributions are close to their limiting values, so using the continuity of conditional entropy \cite{Cover1991},
	\begin{align*}
		&\sum_{q \in Q} Q_{n_m}(q)
		\sum_{e \in E} P_{n_m}(e\mid q)
		\log \frac{1}{\mu(e\mid q)} 
		= \sum_{q \in Q} Q_{\mu}(q)
		\sum_{e \in E} \mu(e\mid q)
		\log \frac{1}{\mu(e\mid q)} + \varepsilon_m,
	\end{align*}
	where $ \varepsilon_m\to 0 $ as $ m\to\infty $.
	
	Because each transition of $ M $ occurs with probability $ 1/2 $, for every state $ q\in Q $ we have
	\[
	\pi_M(E=e\mid Q=q)=\tfrac{1}{2}\quad\text{for each }e\in E,
	\]
	and therefore
	\begin{align*}
		D_{\mathrm{KL}}\!\left(E_\mu\mid Q_\mu\,\middle\|\,E_{\pi_M}\mid Q_{\pi_M}\right)
		&= \sum_{q\in Q}Q_{\mu}(q)\sum_{e\in E}\mu(e\mid q)\log\frac{\mu(e\mid q)}{1/2}\\
		&= 1-H(E_\mu\mid Q_\mu),
	\end{align*}
	where $ H(E_\mu\mid Q_\mu) $ is the conditional entropy. Since
	$
	D_{\mathrm{KL}}\!\big(\mu(E\mid Q) \,\|\, \pi_M(E \mid Q) \big) > 1-s
	$
	by assumption, for all sufficiently large $ m $,
	\[
	\frac{1}{n_m}\log D(X{\upharpoonright} n_m) \ge 1 - s - \varepsilon_m.
	\]
	Hence the martingale $ D $ wins at rate at least $ 2^{(1-s-\varepsilon_m)n_m} $,
	which shows that $ \dim_{\mathrm{FS}}(X)\le s $.

	\textit{(Converse direction.)}
	Suppose now that $ s > \dim_{\mathrm{FS}}(X) $.
	Then there exists an irreducible finite-state martingale $ D $ that succeeds on $ X $ at rate $ 2^{(1-s)n} $; that is, for infinitely many $ n\in\mathbb{N} $,
	$
	D(X{\upharpoonright} n) \ge 2^{(1-s)n}.
	$
	From this martingale $ D $, construct an irreducible labeled Markov chain $ M_D $ having the same set of states as $ D $.
	Each state $ q\in Q $ has two outgoing transitions labeled $ 0 $ and $ 1 $, each occurring with probability $ 1/2 $.
	This defines a fair irreducible Markov chain $ M_D $ that mirrors the structure of the martingale.
	
	Let $P_n$ denote the empirical joint probability distribution for the run of $M_D$ on $X$ and let $ Q_n$ denote the corresponding marginal distribution over the states (see section \ref{sec:preliminaries}).  
	Then
	\[
	D(X{\upharpoonright} n)
	= 2^{\,n}\prod_{q\in Q}\prod_{e\in E}\bigl(P_n(E=e\mid Q=q)\bigr)^{P_n(E=e\mid Q=q)\,Q_n(q)}.
	\]
	Taking logarithms and dividing by $ n $ gives
	\[
	\frac{1}{n}\log D(X{\upharpoonright} n)
	= 1 - \sum_{q\in Q}Q_n(q)\sum_{e\in E}P_n(E=e\mid Q=q)\log\frac{1}{P_n(E=e\mid Q=q)}.
	\]
	For infinitely many $ n $, we have $ D(X{\upharpoonright} n)\ge 2^{(1-s)n} $, implying
	\[
	1 - \sum_{q\in Q}Q_n(q)\sum_{e\in E}P_n(E=e\mid Q=q)\log\frac{1}{P_n(E=e\mid Q=q)} \ge 1 - s.
	\]
	Since the space of distributions is compact, we may choose a cluster point $ \mu $ of a subsequence of  $ \{P_n\}_{n \geq 1} $ satisfying the above inequality.
	By continuity of conditional entropy,
	\[
	1 - \sum_{q\in Q}Q_{\mu}(q)\sum_{e\in E}\mu(E=e\mid Q=q)\log\frac{1}{\mu(E=e\mid Q=q)} \ge 1 - s.
	\]
	For the fair irreducible Markov chain $ M_D $, we again have $ \pi_{M_D}(E=e\mid Q=q)=1/2 $, so
	\begin{align*}
		D_{\mathrm{KL}}\!\big(\mu(E\mid Q) \,\|\, \pi_{M_D}(E \mid Q) \big)
		&= \sum_{q\in Q}Q_{\mu}(q)\sum_{e\in E}\mu(e\mid q)\log\frac{\mu(e\mid q)}{1/2}\\
		&= 1-H(E_\mu\mid Q_\mu).
	\end{align*}
	Hence,
	$
	1 - D_{\mathrm{KL}}\!\big(\mu(E\mid Q) \,\|\, \pi_{M_D}(E \mid Q) \big) \le s.
	$
	It follows that
	\[
	1 - 
	\sup_{M \in \mathcal{M}}\sup_{\mu \in \mathcal{W}_M(X)}
	D_{\mathrm{KL}}\!\big(\mu(E\mid Q) \,\|\, \pi_M(E \mid Q) \big)
	\le s.
	\]
	Since $ s > \dim_{\mathrm{FS}}(X) $ was arbitrary, this establishes the converse inequality and completes the proof.
\end{proof}

We next extend this characterization to  finite-state strong dimension.

\begin{theorem}
	\label{thm:markovchainstrongfsd}
	For every sequence $X \in \{0,1\}^\infty$,
	\[
	\Dim_{\mathrm{FS}}(X)= 
	1 - 
	\sup_{M \in \mathcal{M}}\inf_{\mu \in \mathcal{W}_M(X)}
	D_{\mathrm{KL}}\!\big(\mu(E\mid Q) \,\|\, \pi_M(E \mid Q) \big).
	\]
\end{theorem}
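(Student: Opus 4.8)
The plan is to adapt the two halves of the proof of Theorem~\ref{thm:markovchainfsd} to the $\liminf$/$\inf$ setting; it is convenient to use the elementary reformulation $\Dim_{\mathrm{FS}}(X)=1-\sup_{D}\liminf_{n}\tfrac1n\log D(X{\upharpoonright}n)$, the supremum ranging over irreducible finite-state martingales (the construction of Lemma~\ref{lem:irr_equiv} works verbatim for strong dimension), so that the statement becomes $\sup_{D}\liminf_{n}\tfrac1n\log D(X{\upharpoonright}n)=\sup_{M\in\mathcal M}\inf_{\mu\in\mathcal W_M(X)}D_{\mathrm{KL}}\!\big(\mu(E\mid Q)\,\|\,\pi_M(E\mid Q)\big)$. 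Two facts are used repeatedly: for $M\in\mathcal M$ run on $X$ one has $1-H(E_{P_n}\mid Q_{P_n})=D_{\mathrm{KL}}(P_n(E\mid Q)\,\|\,\pi_M(E\mid Q))$ since $\pi_M(E\mid Q)\equiv\tfrac12$, and continuity of conditional entropy on the compact simplex gives $\limsup_n H(E_{P_n}\mid Q_{P_n})=\sup_{\mu\in\mathcal W_M(X)}H(E_\mu\mid Q_\mu)$; also the Gibbs-inequality bound $D(X{\upharpoonright}n)\le 2^{n}\prod_{q,e}P_n(E=e\mid Q=q)^{P_n(E=e\mid Q=q)\,Q_n(q)}$ holds for every martingale whose transition structure is $M$. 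The inequality $\Dim_{\mathrm{FS}}(X)\ge$ right-hand side is then a near-verbatim adaptation of the converse half of Theorem~\ref{thm:markovchainfsd}: if $s>\Dim_{\mathrm{FS}}(X)$ there is an irreducible martingale $D$ with $D(X{\upharpoonright}n)\ge 2^{(1-s)n}$ for \emph{all} large $n$; combined with the Gibbs bound this forces $H(E_{P_n}\mid Q_{P_n})\le s$ for all large $n$, so \emph{every} cluster point $\mu\in\mathcal W_{M_D}(X)$ obeys $D_{\mathrm{KL}}(\mu(E\mid Q)\,\|\,\pi_{M_D}(E\mid Q))=1-H(E_\mu\mid Q_\mu)\ge 1-s$, whence $\inf_{\mu}D_{\mathrm{KL}}(\mu(E\mid Q)\,\|\,\pi_{M_D}(E\mid Q))\ge 1-s$. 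Here an $\inf$ replaces the $\sup$ of the dimension theorem precisely because the ``all large $n$'' hypothesis pins down all cluster points at once, not just the one along a witnessing subsequence. Taking $\sup_{M\in\mathcal M}$ and letting $s\downarrow\Dim_{\mathrm{FS}}(X)$ finishes this direction.

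The forward inequality $\Dim_{\mathrm{FS}}(X)\le$ right-hand side is where new work is needed. Fix $M\in\mathcal M$ and set $c:=\inf_{\mu\in\mathcal W_M(X)}D_{\mathrm{KL}}(\mu(E\mid Q)\,\|\,\pi_M(E\mid Q))=1-\limsup_n H(E_{P_n}\mid Q_{P_n})$; it suffices to produce, for each $\eta>0$, an irreducible finite-state martingale with $\liminf_n\tfrac1n\log D(X{\upharpoonright}n)\ge c-\eta$. In Theorem~\ref{thm:markovchainfsd} one simply bets with the fixed conditional distribution of the $\mu$ realizing the $\sup$ and wins along the subsequence with $P_{n_k}\to\mu$, but for a $\liminf$ this is useless: for other $n$ the empirical $P_n$ may lie near a part of $\mathcal W_M(X)$ biased the opposite way, on which a fixed bet loses capital. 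Instead I use an \emph{adaptive} martingale $D_L$ with a block parameter $L$: $D_L$ simulates $M$ on $X$, divides the run into consecutive blocks of $L$ steps, and inside each block bets on $M$'s next transition via the add-one (Laplace) estimate $\tfrac{N(q,e)+1}{N(q)+2}$ formed from the transition counts $N(q,e)$ accumulated \emph{since the current block began}, resetting all counts at block boundaries. Since every counter is bounded by $L$, this is a genuine finite-state martingale, and by the construction of Lemma~\ref{lem:irr_equiv} we may take it irreducible.

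The crux is a uniform redundancy bound: the per-symbol log-loss of $D_L$ on $X{\upharpoonright}n$ exceeds $H(E_{P_n}\mid Q_{P_n})$ by at most $O_{|Q|}(\tfrac{\log L}{L})+o_n(1)$ for every $n$. Inside a block~$b$, the add-one rule loses at most $O(\log L)$ per state over the ``ideal'' bet $P^{(b)}(e\mid q)$ ($P^{(b)}$ the block's empirical joint distribution), whose total log-loss is exactly $L\cdot H(E_{P^{(b)}}\mid Q_{P^{(b)}})$; summing over the $\lfloor n/L\rfloor$ full blocks and using the \emph{concavity of conditional entropy as a function of the joint distribution} (a standard consequence of the concavity of $t\mapsto -t\log t$ via perspective functions), so that the average of the block entropies is at most the entropy of their average, which is $H(E_{P_{n'}}\mid Q_{P_{n'}})$ for $n'=L\lfloor n/L\rfloor$, and bounding the at-most-one partial block by $O(L\log L)$, gives the claim after dividing by $n$ (the discrepancy between $P_{n'}$ and $P_n$ vanishes as $n\to\infty$ by continuity). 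Hence $\liminf_n\tfrac1n\log D_L(X{\upharpoonright}n)\ge\big(1-\limsup_n H(E_{P_n}\mid Q_{P_n})\big)-O_{|Q|}(\tfrac{\log L}{L})=c-O_{|Q|}(\tfrac{\log L}{L})$; choosing $L$ large makes this $\ge c-\eta$, and $\sup_{M\in\mathcal M}$ then closes the argument. The main obstacle is exactly this forward half — building a \emph{single} finite-state martingale that tracks the global empirical conditional entropy of the $M$-run along \emph{all} prefixes, not merely a subsequence — with the $O(\log L)$ per-block Laplace redundancy estimate and the concavity of conditional entropy as the two technical points requiring care; checking that the block construction is finite-state and can be made irreducible is routine given Lemma~\ref{lem:irr_equiv}.
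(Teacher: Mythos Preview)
Your proposal is correct. The converse direction (the inequality $\Dim_{\mathrm{FS}}(X)\ge$ right-hand side) is essentially the paper's argument, with the Gibbs/cross-entropy bound made explicit where the paper leaves it implicit.

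The forward direction, however, is a genuinely different route. The paper fixes $M$ with $\inf_{\mu\in\mathcal W_M(X)}D_{\mathrm{KL}}(\mu\|\pi_M)>1-s$, observes that the set $\mathcal C$ of limiting distributions with divergence $\ge 1-s$ is compact, covers it by finitely many small balls centered at $\mu_1,\dots,\mu_N$, builds for each $\mu_i$ the \emph{fixed-bet} martingale $D_i$ as in Theorem~\ref{thm:markovchainfsd}, and then invokes the near-simulation of $N$-account gamblers by single-account gamblers~\cite[Theorem~4.5]{Dai2004} to merge $\{D_1,\dots,D_N\}$ into one FSM that strongly succeeds. You instead construct a \emph{single} adaptive martingale $D_L$ that runs $M$, resets every $L$ steps, and bets by the Laplace (add-one) rule on the current-block counts; the per-block redundancy $O(|Q|\log L)$ together with concavity of $H(E\mid Q)$ in the joint (correctly argued via perspective functions) gives the uniform bound $\liminf_n\frac1n\log D_L(X{\upharpoonright}n)\ge c-O_{|Q|}(\tfrac{\log L}{L})$. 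The paper's approach is shorter and soft-analytic but outsources the crucial merging step to~\cite{Dai2004}; yours is self-contained and yields an explicit finite-state gambler, at the cost of carrying the Laplace redundancy and concavity estimates. Both are standard universal-coding devices; either is a valid completion of the theorem.
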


\begin{proof}
	The structure of the proof parallels that of the previous theorem, and we only indicate the essential differences.

	\textit{(Forward direction.)}
	Let $ s $ be any real number strictly greater than the term on the right. We show that $ \Dim_{\mathrm{FS}}(X) \le s $. Choose an irreducible Markov chain $M$ such that
	$
	1 - 
	\inf_{\mu\in \mathcal{W}_M(X)}
	D_{\mathrm{KL}}\!\big(\mu(E\mid Q) \,\|\, \pi_M(E \mid Q) \big) < s.
	$
	Hence for every limiting distribution $\mu\in\mathcal{W}_M(X)$,
	$
	D_{\mathrm{KL}}\!\big(\mu(E\mid Q) \,\|\, \pi_M(E \mid Q) \big)\ge 1-s.
	$
	Let
	\[
	\mathcal{C}=\Bigl\{\mu\in\mathcal{W}_M(X):
	D_{\mathrm{KL}}\!\big(\mu(E\mid Q) \,\|\, \pi_M(E \mid Q) \big)\ge 1-s\Bigr\}.
	\]
	By continuity of $D_{\mathrm{KL}}$, $\mathcal{C}$ is a closed subset of the compact space of probability measures on $Q\times E$; hence $\mathcal{C}$ is compact.  
	Cover $\mathcal{C}$ by finitely many balls $B_1,\dots,B_N$ of small radius $\varepsilon'>0$ centered at measures
	$\mu_1,\dots,\mu_N\in\mathcal{C}$.
	
	For each center $\mu_i$, define a finite-state martingale $D_i$ that bets according to $\mu_i(E=e\mid Q=q)$, exactly as in the first part of the proof of the previous theorem.  
	Each $D_i$ wins on any sequence whose empirical distribution is sufficiently close to $\mu_i$ at rate at least $2^{(1-s)n}$ up to an error that vanishes with $\varepsilon'$.
	
	Now, since every limiting distribution of the run of $M$ on $X$ lies in $\mathcal{C}$,
	for all sufficiently large $n$ the empirical measure of transitions lies within one of the balls $B_i$.
	When this happens, the empirical distribution of transitions differs from the center $\mu_i$ by at most $\varepsilon'$ in total variation.
	By continuity of $D_{\mathrm{KL}}$, the value
	$
	D_{\mathrm{KL}}\!\big(P_n(E\mid Q) \,\|\, \pi_M(E \mid Q) \big)
	$
	differs from 
	$
	D_{\mathrm{KL}}\!\big(\mu_i(E\mid Q) \,\|\, \pi_M(E \mid Q) \big)
	$
	by at most an arbitrarily small amount.
	Hence, for all sufficiently large $n$, one of the martingales $D_i$ is winning at rate at least $2^{(1-s-\varepsilon)n}$ for arbitrarily $\varepsilon>0$.
	
	The key to finish the proof is to consider the collection $\{D_1,\dots,D_N\}$ as an $N$-account martingale.  
	By the near-simulation of multi-account by single-account finite-state gamblers (see \cite[Theorem 4.5]{Dai2004}), this implies the existence of a \emph{single} finite-state martingale $D$ that succeeds on $X$ at rate at least $2^{(1-s-2\varepsilon)n}$.  
	Since $\varepsilon$ is arbitrarily small, we have $\Dim_{\mathrm{FS}}(X)\le s.$

	\textit{(Converse direction.)}
	Assume $s>\Dim_{\mathrm{FS}}(X)$.  
	Then there exists an irreducible FSM $D$ that strongly succeeds on $X$ at rate $2^{(1-s)n}$ eventually.  
	Exactly as in the converse direction of the previous theorem, define the corresponding irreducible Markov chain $M_D$ and the empirical measures $P_n$.  
	Since the success is strong (i.e., holds for all sufficiently large $n$), every cluster point $\mu$ of the sequence $\{P_n\}$ must satisfy
	$
	D_{\mathrm{KL}}\!\big(\mu(E\mid Q) \,\|\, \pi_{M_D}(E \mid Q) \big)\ge 1-s.
	$
	Because this holds for every limiting measure of $M_D$ on $X$, this establishes the converse.
\end{proof}

We now show that Lemma~4.5 from Schnorr and Stimm~\cite{Schnorr1972},  
which characterizes normality in terms of empirical state convergence in fair Markov chains,  
is a special case of our Markov-chain characterization of finite-state dimension.

\begin{lemma}[Schnorr and Stimm \cite{Schnorr1972}]
	A binary sequence $X \in \{0,1\}^\omega$ is \emph{normal} if and only if, for every irreducible finite-state Markov chain $M$ in which each state has two outgoing transitions labeled $0$ and $1$, each taken with probability $\tfrac{1}{2}$, the empirical distribution of the states visited by $M$ during its run on $X$ converges to the stationary distribution $\pi_M$ of $M$.
\end{lemma}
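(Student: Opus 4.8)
The plan is to derive the Schnorr–Stimm lemma as a direct corollary of Theorem~\ref{thm:markovchainfsd} together with the characterization of normality as the sequences of finite-state dimension~$1$ (the cited result of \cite{bourke2005entropy}). The key observation is that for any irreducible Markov chain $M \in \mathcal{M}$ and any limiting distribution $\mu \in \mathcal{W}_M(X)$, the term $D_{\mathrm{KL}}\!\big(\mu(E\mid Q)\,\|\,\pi_M(E\mid Q)\big)$ is a nonnegative quantity, and Theorem~\ref{thm:markovchainfsd} expresses $1 - \dim_{\mathrm{FS}}(X)$ as the supremum of these quantities. Hence $\dim_{\mathrm{FS}}(X) = 1$ if and only if $D_{\mathrm{KL}}\!\big(\mu(E\mid Q)\,\|\,\pi_M(E\mid Q)\big) = 0$ for every $M \in \mathcal{M}$ and every $\mu \in \mathcal{W}_M(X)$; since the conditional KL divergence vanishes exactly when the conditional distributions agree, this says $\mu(E = e \mid Q = q) = \pi_M(E = e \mid Q = q) = \tfrac12$ for every state $q$ and every outgoing transition $e$.

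First I would record that $X$ is normal iff $\dim_{\mathrm{FS}}(X) = 1$, and then unwind the right-hand side of Theorem~\ref{thm:markovchainfsd}: since each summand of a supremum of nonnegative terms equalling zero forces every term to be zero, normality is equivalent to the statement ``for all $M \in \mathcal{M}$ and all $\mu \in \mathcal{W}_M(X)$, $D_{\mathrm{KL}}(\mu(E\mid Q)\,\|\,\pi_M(E\mid Q)) = 0$.'' Second, I would translate the vanishing of the conditional KL divergence into the assertion that every limiting empirical distribution $\mu$ of transitions is uniform on the two outgoing edges of each visited state. Third, I would bridge from ``empirical transition distribution is uniform'' to ``empirical state distribution converges to $\pi_M$'': if every $\mu \in \mathcal{W}_M(X)$ has $\mu(E = e \mid Q = q) = \tfrac12$ for all $q, e$, then the joint distribution $\mu$ over $Q \times E$ is forced to be the unique stationary joint distribution of $M$ — indeed, $\mu$ being a cluster point of empirical frequencies along a run of the chain already makes its state-marginal $Q_\mu$ flow-invariant in the sense $Q_\mu(q') = \sum_{(q,b,q') \in E} \mu(Q=q, E=(q,b,q'))$, and combined with the conditional uniformity this yields $Q_\mu P_M = Q_\mu$, so $Q_\mu = \pi_M$ by irreducibility; since every cluster point of the bounded sequence $(Q_n)$ equals $\pi_M$, the sequence itself converges to $\pi_M$. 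Conversely, if the empirical state distribution converges to $\pi_M$ for every such $M$, then for every $M \in \mathcal{M}$ one has $\mathcal{W}_M(X)$ contained in the set of joint distributions with state-marginal $\pi_M$; I would need to argue that this forces the transition conditionals to be uniform as well, which follows because the empirical transition counts out of a fixed state are determined, in the limit, by how the symbols of $X$ are distributed among the visits to that state, and convergence of state frequencies to $\pi_M$ across the rich family $\mathcal{M}$ pins these down — here one invokes that $\mathcal{M}$ is expressive enough to detect any block-frequency deviation, precisely the content that non-normality yields some chain with biased limiting transitions.

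The step I expect to be the main obstacle is the ``flow-invariance'' bridge in the forward direction — namely, verifying cleanly that a cluster point $\mu \in \mathcal{W}_M(X)$ of the empirical joint distributions satisfies the balance relation $\sum_{e \in \mathsf{Out}(q)} \mu(Q=q,E=e) = \sum_{e = (q',b,q)} \mu(Q=q',E=e)$ for each state $q$ (so that its state-marginal is a genuine invariant measure of $M$), since this requires a small telescoping argument: the number of visits to $q$ within the first $n$ steps and the number of transitions \emph{into} $q$ within the first $n$ steps differ by at most one, so their normalized versions share all cluster points. Once that is in hand, uniqueness of the stationary distribution of an irreducible chain finishes the forward direction immediately. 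For the converse direction, the delicate point is ensuring that convergence of \emph{state} frequencies for \emph{all} $M \in \mathcal{M}$ genuinely forces uniformity of the \emph{transition} conditionals; the cleanest route is contrapositive — if some $M$ and some $\mu \in \mathcal{W}_M(X)$ had a non-uniform conditional at state $q$, then $D_{\mathrm{KL}}(\mu(E\mid Q)\,\|\,\pi_M(E\mid Q)) > 0$, so by Theorem~\ref{thm:markovchainfsd} we would get $\dim_{\mathrm{FS}}(X) < 1$, contradicting normality via \cite{bourke2005entropy} — so in fact both directions reduce entirely to Theorem~\ref{thm:markovchainfsd} plus the flow-invariance observation, and no separate combinatorial analysis of block frequencies is needed.
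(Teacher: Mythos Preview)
Your forward direction is correct and matches the paper's argument closely; your flow-invariance observation (that the in-count and out-count at each state differ by at most one, so any cluster point $\mu$ has state-marginal $Q_\mu$ satisfying $Q_\mu P = Q_\mu$) is exactly the justification the paper invokes when it writes ``$Q_\mu P = Q_\mu$''.

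The converse direction, however, has a genuine gap. You want to show that if the empirical state distribution converges to $\pi_M$ for \emph{every} $M \in \mathcal{M}$, then $X$ is normal. Your proposed contrapositive runs: ``if some $M$ and some $\mu \in \mathcal{W}_M(X)$ had a non-uniform conditional, then $D_{\mathrm{KL}} > 0$, hence $\dim_{\mathrm{FS}}(X) < 1$, contradicting normality.'' But in this direction you have \emph{not} assumed $X$ is normal --- you have only assumed that state frequencies converge for every $M$ --- so ``contradicting normality'' is not a contradiction with your hypothesis. What you actually need to conclude is ``some $M'$ whose state frequencies do \emph{not} converge to $\pi_{M'}$,'' and your argument does not reach that. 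Indeed, for a fixed $M$ the implication ``state frequencies converge to $\pi_M$ $\Rightarrow$ transition conditionals uniform'' is simply false (take the two-state chain that ignores its input and alternates states; state frequencies are always $(1/2,1/2)$ regardless of $X$).

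The gap is repairable: given $M$ with a non-uniform conditional in some $\mu$, pass to the chain $M'$ whose states are the \emph{edges} of $M$ (equivalently, $M'$ records the current state of $M$ together with the last bit read). Then the state frequencies of $M'$ on $X$ are exactly the joint $(Q,E)$-frequencies of $M$ on $X$, and $\pi_{M'}(e) = \pi_M(q)/2$ for $e \in \mathsf{Out}(q)$; a non-uniform conditional in $\mu$ forces the state marginal of the corresponding cluster point in $\mathcal{W}_{M'}(X)$ to differ from $\pi_{M'}$. With this extra step your reduction to Theorem~\ref{thm:markovchainfsd} goes through. The paper instead handles the converse by a direct, elementary argument that bypasses Theorem~\ref{thm:markovchainfsd} entirely: if $X$ is not normal, some block length $L$ has non-uniform block frequencies, and the ``remember the last $L$ bits'' automaton is a fair irreducible chain whose state frequencies are precisely those block frequencies and hence fail to converge to its (uniform) stationary distribution. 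So your claim that ``no separate combinatorial analysis of block frequencies is needed'' is not borne out by the argument as written.
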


\begin{proof}
	(\emph{Forward direction.})
	If $ X $ is normal, then $ \dim_{\mathrm{FS}}(X)=1 $.  
	From the characterization theorem above, we immediately obtain that for every fair irreducible Markov chain $ M $ and every $ \mu \in \mathcal{W}_M(X) $,
	$
	D_{\mathrm{KL}}\!\big(\mu(E\mid Q) \,\|\, \pi_{M}(E \mid Q) \big)=0.
	$
	Hence $ \mu(E=e\mid Q=q)=\pi_M(E=e\mid Q=q) $ for all $ e,q $, and every state $ q $ has $\mu(Q=q)>0$, since in an irreducible fair Markov chain with a single ergodic component the probability mass cannot remain confined to a proper subset of the ergodic component. Since $ M $ is fair, this means
	$
	\mu(E=e\mid Q=q)=\tfrac{1}{2}\quad\text{for all }e,q.
	$
	Let $ P $ denote the transition matrix of $ M $.  
	Then the marginal distribution $ Q_\mu $ on states satisfies $ Q_\mu P = Q_\mu $, which is exactly the condition defining the stationary distribution $ \pi_M $, i.e., $ \pi_M P = \pi_M $ and $ \sum_q \pi_M(q)=1 $. For an irreducible finite-state Markov chain, the stationary distribution $ \pi_M $ is unique, so $ Q_\mu=\pi_M $.  
	Since $ \mu(E=e\mid Q=q)=\pi_M(E=e\mid Q=q) $ and $ Q_\mu=\pi_M $, we have $ \mu=\pi_M $.  
	Therefore, along the run of $ M $ on $ X $, the empirical distribution of the visited states converges to $ \pi_M $.

	(\emph{Converse direction.})
	The converse is straightforward.  
	If $ X $ is not normal, then for some block length $ L $ the frequencies of $ L $-bit blocks in $ X $ do not converge to the uniform distribution.  
	Consider the finite automaton (equivalently, a Markov chain) that remembers the last $ L $ bits of the sequence.  
	This chain is fair and has the uniform stationary distribution on $ \{0,1\}^L $.  
	When run on $ X $, the empirical distribution of its states equals the empirical distribution of $ L $-bit blocks, which by assumption fails to converge to uniform.  
	Hence, for this chain, the empirical distribution of states does not converge to $ \pi_M $.
\end{proof}

\section{Agafonov's Theorem and Finite-State Dimension}
\label{sec:agafonov}

The relationship between normality and finite-state selection has its origins in a classical result of Agafonov~\cite{Agafonov1968}, who showed that normality is preserved under all finite-state selection procedures. In other words, a sequence remains normal even when a finite automaton selects a subsequence from it. Formally:

\begin{theorem}[Agafonov \cite{Agafonov1968}]
	\label{thm:agafonovfornormal}
	A real number $ x $ is normal if and only if the subsequence of $ x $
	selected by any finite-state selector is normal.
\end{theorem}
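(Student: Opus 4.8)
The plan is to treat the two implications separately. The ``if'' direction is immediate: the one-state selector whose unique state is selecting outputs all of $X$, so if every automatically selected subsequence of $X$ is normal, then so is $X$ itself. The substance is the ``only if'' direction, which is Agafonov's theorem proper: assuming $X$ normal and given a finite-state selector $\mathcal{S}=(Q,q_0,\delta,S)$, I want to prove that $\mathcal{S}(X)$ is normal. First I would use Lemma~\ref{lem:irr_selector_equiv} to assume $\mathcal{S}$ irreducible, with unique ergodic set $E_{\mathcal{S}}$, and dispose of the degenerate case $S\cap E_{\mathcal{S}}=\emptyset$, in which only finitely many symbols of $X$ are ever selected and $\mathcal{S}(X)$ is vacuously normal. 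In the remaining case $S\cap E_{\mathcal{S}}\neq\emptyset$, I would apply the Schnorr--Stimm lemma established above to the fair irreducible chain $M_{\mathcal{S}}$: since $X$ is normal, the empirical state distribution of $M_{\mathcal{S}}$ along $X$ converges to $\pi_{\mathcal{S}}$, so, writing $k_n$ for the number of positions among the first $n$ at which $\mathcal{S}$ selects a symbol, $k_n/n\to\lambda_{\mathcal{S}}:=\sum_{q\in S}\pi_{\mathcal{S}}(q)>0$ (the inequality because every state of the unique ergodic set has positive stationary probability). This \emph{linear} lower bound on the density of selected positions is the first point requiring care.

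Next I would argue by contradiction: suppose $Y:=\mathcal{S}(X)$ is not normal. By~\cite{bourke2005entropy}, $\dim_{\mathrm{FS}}(Y)<1$, so there are $s'<1$ and a finite-state martingale $D'$ with $\limsup_k 2^{(s'-1)k}D'(Y{\upharpoonright}k)=\infty$; extract lengths $k_1<k_2<\cdots$ with $D'(Y{\upharpoonright}k_j)\ge j\,2^{(1-s')k_j}$. I would then transport the success of $D'$ back to $X$ via the product martingale $D$ with state set $Q\times Q_{D'}$ that, in state $(q,r)$, bets exactly as $D'$ does in state $r$ when $q\in S$ and places the neutral bet $\tfrac{1}{2}$ (leaving its capital unchanged) when $q\notin S$; the $Q$-coordinate always follows $\delta$, and the $Q_{D'}$-coordinate is advanced only at selected positions. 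By construction $D(X{\upharpoonright}n)=D'(Y{\upharpoonright}k_n)$ for every $n$. Taking $n_j$ to be the least $n$ with $k_n=k_j$, one has $k_j=(\lambda_{\mathcal{S}}+o(1))\,n_j$ and $D(X{\upharpoonright}n_j)=D'(Y{\upharpoonright}k_j)\ge j\,2^{(1-s')k_j}\ge j\,2^{(1-s'')n_j}$ for all large $j$, where $s'':=1-(1-s')\lambda_{\mathcal{S}}/2<1$. Hence $\limsup_n 2^{(s''-1)n}D(X{\upharpoonright}n)=\infty$, so $\dim_{\mathrm{FS}}(X)\le s''<1$, contradicting the normality of $X$ (again by~\cite{bourke2005entropy}). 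Therefore $\mathcal{S}(X)$ is normal.

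The steps I expect to need the most care are: (i) the reduction to irreducible selectors together with the use of Schnorr--Stimm to guarantee \emph{linear} density of selected positions --- without linearity, success of $D'$ on $Y$ translates only into success of $D$ on $X$ at a vanishing rate and the argument collapses; (ii) verifying that the product martingale $D$ is a legitimate finite-state martingale and that $D(X{\upharpoonright}n)=D'(Y{\upharpoonright}k_n)$ holds identically; and (iii) the asymptotic bookkeeping linking $k_j$ to $n_j$ that converts ``$D'$ succeeds on $Y$'' into ``$D$ witnesses $\dim_{\mathrm{FS}}(X)<1$''.

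As an alternative I would keep the entire argument inside the Markov-chain framework of Theorem~\ref{thm:markovchainfsd}: from a chain $N\in\mathcal{M}$ and $\nu\in\mathcal{W}_N(Y)$ with $\theta:=D_{\mathrm{KL}}(\nu(E\mid Q)\,\|\,\pi_N(E\mid Q))>0$, form the product chain $M$ of $M_{\mathcal{S}}$ with $N$ (advancing the $N$-component only on selected steps), restrict it to the ergodic set the run on $X$ enters so that it lies in $\mathcal{M}$, and show via the log-sum inequality that $M$ has a limiting distribution $\mu$ with $D_{\mathrm{KL}}(\mu(E\mid Q)\,\|\,\pi_M(E\mid Q))\ge\lambda_{\mathcal{S}}\theta>0$, whence $\dim_{\mathrm{FS}}(X)\le 1-\lambda_{\mathcal{S}}\theta<1$. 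This version in fact anticipates the quantitative Agafonov-type inequality proved later in the section, of which the present theorem is the special case $\dim_{\mathrm{FS}}(X)=1$.
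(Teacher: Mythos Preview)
Your argument is correct; the product-martingale construction together with the Schnorr--Stimm density statement does the job. (Two cosmetic remarks: you overload the symbol $k$ --- once as the running count $k_n=|\mathcal{S}(X{\upharpoonright}n)|$ and once as the extracted lengths $k_j$ --- and you only explicitly treat the degenerate case $S\cap E_{\mathcal{S}}=\emptyset$, not $E_{\mathcal{S}}\subseteq S$; neither affects the validity of the proof.)

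The route, however, differs from the paper's. The paper does not prove Agafonov's theorem directly but derives it as the $\dim_{\mathrm{FS}}(X)=1$ special case of Theorem~\ref{thm:agafonovforfsd}: since $\lambda_{\mathcal{S}}\dim_{\mathrm{FS}}(\mathcal{S}(X))+(1-\lambda_{\mathcal{S}})\Dim_{\mathrm{FS}}(\mathcal{S}^{c}(X))\ge 1-\varepsilon$ for every $\varepsilon>0$ and both dimensions are bounded by~$1$, one must have $\dim_{\mathrm{FS}}(\mathcal{S}(X))=1$. Your proof is in effect the specialization of the \emph{proof} of Theorem~\ref{thm:agafonovforfsd} to this case (your product martingale is exactly the combined gambler $G$ there with $G_2$ taken to be the trivial martingale, and your appeal to Schnorr--Stimm for the density $k_n/n\to\lambda_{\mathcal{S}}$ is the specialization of the $\|Q_\mu-\pi_{\mathcal{S}}\|_1<\varepsilon$ estimate). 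Your version is more elementary and self-contained --- it needs only Theorem~\ref{thm:markovchainfsd} via the Schnorr--Stimm lemma, not Lemmas~\ref{lem:convergenceofstationarydistributions} or~\ref{lem:min-frequency} --- while the paper's version emphasizes that the classical theorem sits inside a quantitative inequality valid for all sequences of sufficiently high dimension. Your alternative sketch in the last paragraph (building the product chain and bounding the conditional KL divergence from below by $\lambda_{\mathcal{S}}\theta$) is precisely the Markov-chain shadow of that same argument and, as you note, already anticipates Theorem~\ref{thm:agafonovforfsd}.
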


When we turn from normality to finite-state dimension, the connection between subsequence selection and randomness becomes considerably more subtle.  Simple examples already show that the behavior can vary dramatically: for instance, in the \emph{diluted sequence} $0r_10r_20r_30r_4\dots$ obtained by interleaving a normal sequence with zeros \cite{Dai2004}, the overall dimension is $\tfrac{1}{2}$, yet there exist simple automatic selectors that extract from it one subsequence consisting entirely of zeros (of dimension~$0$) and another that reproduces the original normal sequence (of dimension~$1$).  Thus, in general, finite-state selection need not preserve dimension in any obvious sense.  Nevertheless, by employing the Markov chain characterization developed in Section~\ref{sec:markovchainfsd}, we show that an analogue of Agafonov’s stability phenomenon continues to hold for all sufficiently random sequences: for every irreducible finite-state selector, the finite-state dimensions of the selected and unselected subsequences remain quantitatively balanced whenever the original sequence has high dimension. Later in this section, we show that Agafonov’s theorem for normal numbers is a special case of this more general result. The following technical lemmas will be used in the proof of the main theorem.

\begin{lemma}
	\label{lem:convergenceofstationarydistributions}
	Let $ M $ and $ M' $ be two finite-state irreducible Markov chains with transition matrices 
	$ P $ and $ P' $, and stationary distributions $ \pi $ and $ \pi' $, respectively.  
	For every $ \varepsilon > 0 $, there exists $ \delta > 0 $ such that if
	$
	\| P - P' \|_{\infty} < \delta,
	$
	then
	$
	\| \pi - \pi' \|_{1} < \varepsilon.
	$
\end{lemma}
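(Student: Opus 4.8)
The plan is to read this as a continuity statement: with $M$ (hence $P$ and $\pi$) regarded as fixed, the map sending a transition matrix to its stationary distribution is continuous at $P$, and $\delta$ is allowed to depend on $M$. I would prove this by a soft compactness argument rather than by explicit formulas. Suppose the conclusion fails. Then there is an $\varepsilon_0>0$ and a sequence of irreducible chains $M_k=(Q,P_k)$ on the same state space $Q$ with $\|P-P_k\|_\infty\to 0$ but $\|\pi-\pi_k\|_1\ge\varepsilon_0$ for all $k$, where $\pi_k$ denotes the (unique) stationary distribution of $M_k$.

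Since $\mathsf{Prob}(Q)$ is a compact metric space, pass to a subsequence along which $\pi_k\to\nu\in\mathsf{Prob}(Q)$; then $\|\nu-\pi\|_1\ge\varepsilon_0$, so in particular $\nu\neq\pi$. Each $\pi_k$ satisfies $\pi_kP_k=\pi_k$. The map $(\sigma,A)\mapsto\sigma A$ is jointly continuous (it is given by polynomials in finitely many coordinates), so letting $k\to\infty$ in $\pi_kP_k=\pi_k$ yields $\nu P=\nu$; that is, $\nu$ is a stationary distribution of $M$. But $M$ is irreducible, and an irreducible finite-state Markov chain has a unique stationary distribution (as already invoked in the proof of the Schnorr--Stimm lemma above; see \cite{Chung1967}), so $\nu=\pi$, contradicting $\nu\neq\pi$. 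Hence the map is continuous at $P$, which is exactly the claimed $\varepsilon$--$\delta$ statement: one obtains $\|\pi-\pi'\|_1<\varepsilon$ for all irreducible $M'$ with $\|P-P'\|_\infty<\delta$.

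The only point requiring care is the uniqueness of the stationary distribution under the paper's relaxed notion of irreducibility, which allows a block of transient states preceding the unique ergodic set $E_M$. Writing $I-P$ in block-triangular form with respect to the partition of $Q$ into $E_M$ and the transient states, the ergodic block $I-P_{E_ME_M}$ has rank $|E_M|-1$ (the strongly irreducible case, by Perron--Frobenius) and the transient block $I-P_{TT}$ is invertible (the spectral radius of $P_{TT}$ is $<1$), so $I-P$ has rank $|Q|-1$; its left kernel is therefore one-dimensional and contains exactly one probability vector, namely $\pi$, which is supported on $E_M$. I do not expect any genuine obstacle beyond this bookkeeping. A more computational alternative would express the coordinates of $\pi$ as normalized cofactors of $I-P$, i.e.\ via the Markov-chain tree theorem, giving an explicit rational --- hence continuous --- dependence on the entries of $P$; but that route is complicated by the fact that the support and ergodic set of $P'$ may differ from those of $P$, a difficulty the compactness argument sidesteps.
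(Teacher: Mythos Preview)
Your proof is correct and takes a genuinely different route from the paper. The paper argues constructively: it writes $\pi$ as the unique solution of a linear system $\pi M = b$, where $M$ is obtained from $I-P$ by replacing one row with $\mathbf{1}^\top$, observes that irreducibility makes $M$ invertible, and then invokes continuity of $M\mapsto M^{-1}$ on $\mathrm{GL}(n,\mathbb{R})$ to conclude that $\pi=bM^{-1}$ depends continuously on $P$. Your argument instead proceeds by contradiction and compactness: a putative bad sequence $\pi_k$ subconverges in $\mathsf{Prob}(Q)$ to some $\nu$, the stationarity equation passes to the limit, and uniqueness of the stationary distribution forces $\nu=\pi$. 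The paper's approach has the virtue of being explicit and in principle quantitative, but it tacitly relies on the row replacement yielding an invertible matrix---which requires choosing the replaced row inside the ergodic set and using $\pi_i>0$ there. Your approach is softer but more robust: it needs only compactness of $\mathsf{Prob}(Q)$ and uniqueness of the stationary measure, and, as you note, it sidesteps any bookkeeping about how the support or ergodic structure of $P'$ might differ from that of $P$. Your added paragraph verifying uniqueness under the relaxed notion of irreducibility (via the block-triangular structure of $I-P$) is a nice touch that the paper's proof also implicitly needs but does not spell out.
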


\begin{proof}
	The stationary distribution $ \pi $ of a finite irreducible Markov chain with transition matrix $ P $ is the unique solution of the linear system consisting of $\pi (I - P) = 0$ and $\pi \mathbf{1} = 1$,
	where $ \mathbf{1} $ denotes the all-ones column vector. These two equations can be combined into a single linear system
	$
	\pi M = b,
	$
	where $ M $ is obtained from $ I - P $ by replacing one of its rows with $ \mathbf{1}^\top $, and $ b $ is the vector having a $ 1 $ in the corresponding position and $ 0 $ elsewhere.  
	Since $ P $ is irreducible, $ M $ is invertible, and thus
	$
	\pi = b M^{-1}.
	$
	
	Now let $ P' $ be another transition matrix, and define $ M' $ and $ \pi' $ analogously.  
	Both $ M $ and $ M' $ belong to the open set
	$
	\mathrm{GL}(n, \mathbb{R}) = \{ M \in \mathbb{R}^{n \times n} : \det(M) \neq 0 \}.
	$
	On this set, the mapping $ M \mapsto M^{-1} $ is continuous, since the entries of $ M^{-1} $ are rational functions of the entries of $ M $ and $ \det(M) $ does not vanish.  
	It follows that the stationary distribution $ \pi = b M^{-1} $ depends continuously on $ P $, and the theorem follows immediately from this observation.
\end{proof}

\begin{lemma} \label{lem:min-frequency}
	Let $M$ be an irreducible fair Markov chain. For any $\eta >0$, there exists a $\delta >0$ such that if $\dim_{\mathrm{FS}}(X) > 1-\delta$, then for any $\mu \in \mathcal{W}_M(X)$, and any $q$ in the ergodic component of $M$, $\mu(q)> (1-\eta)\pi_M(q)$. 
\end{lemma}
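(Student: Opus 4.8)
The plan is to combine the Markov-chain characterization of Theorem~\ref{thm:markovchainfsd} with two further ingredients: a flow-conservation property that every limiting distribution in $\mathcal{W}_M(X)$ must satisfy, and the quantitative stability of stationary distributions already exploited in Lemma~\ref{lem:convergenceofstationarydistributions}. Fix the irreducible fair chain $M=(Q,P)$ with stationary distribution $\pi_M$, and set $\pi_{\min}=\min_{q\in E_M}\pi_M(q)>0$. Assuming $\dim_{\mathrm{FS}}(X)>1-\delta$, Theorem~\ref{thm:markovchainfsd} immediately gives $D_{\mathrm{KL}}\!\big(\mu(E\mid Q)\,\|\,\pi_M(E\mid Q)\big)<\delta$ for \emph{every} $\mu\in\mathcal{W}_M(X)$, and the work is to convert this conditional-divergence bound into a bound on the state marginal of $\mu$.

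The second step is to record the flow-conservation identity. Writing $q_0,q_1,\dots$ for the run of $M$ on $X$, for each state $q'$ the quantities $\bigl|\{i<n:q_i=q'\}\bigr|$ and $\sum_{e:\mathrm{target}(e)=q'}\bigl|\{i<n:(q_i,x_i,q_{i+1})=e\}\bigr|=\bigl|\{i<n:q_{i+1}=q'\}\bigr|$ differ by at most one (boundary terms). Dividing by $n$ and passing to a limit along a subsequence realizing $\mu$ yields $\mu(Q=q')=\sum_{e:\mathrm{target}(e)=q'}\mu(Q=\mathrm{src}(e),E=e)$ for all $q'$. Equivalently, defining a stochastic matrix $\tilde P$ on $Q$ by $\tilde p_{q,q'}=\sum_{e\in\mathsf{Out}(q),\,\mathrm{target}(e)=q'}\mu(E=e\mid Q=q)$ whenever $\mu(Q=q)>0$ and $\tilde p_{q,\cdot}=p_{q,\cdot}$ otherwise, the state marginal of $\mu$ (viewed as a row vector over $Q$) is a stationary distribution of $\tilde P$, i.e.\ $\mu\tilde P=\mu$.

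The third step is the perturbation estimate. Each $\tilde p_{q,\cdot}$ and $p_{q,\cdot}$ is obtained by grouping, respectively, $\mu(\cdot\mid q)$ and the uniform law $\pi_M(\cdot\mid q)$ by target state, so $\|\tilde p_{q,\cdot}-p_{q,\cdot}\|_1\le\|\mu(\cdot\mid q)-\pi_M(\cdot\mid q)\|_1$; by Pinsker's inequality and concavity of $\sqrt{\cdot}$ (Jensen with weights $\mu(Q=q)$, the $\mu(Q=q)=0$ rows contributing nothing),
\[
\|\mu(\tilde P-P)\|_1\ \le\ \sum_{q}\mu(Q=q)\,\sqrt{2\,D_{\mathrm{KL}}\!\big(\mu(\cdot\mid q)\,\|\,\pi_M(\cdot\mid q)\big)}\ \le\ \sqrt{2\,D_{\mathrm{KL}}\!\big(\mu(E\mid Q)\,\|\,\pi_M(E\mid Q)\big)}\ <\ \sqrt{2\delta}.
\]
Subtracting the stationarity relations $\pi_MP=\pi_M$ and $\mu\tilde P=\mu$ gives $(\mu-\pi_M)(I-P)=\mu-\mu P=\mu(\tilde P-P)$, and both sides lie in the hyperplane $V=\{v:\sum_q v_q=0\}$ since all rows of $P$ and $\tilde P$ sum to one. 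Because $M$ is irreducible, its stationary distribution is unique, so the left kernel of $I-P$ is the line $\mathrm{span}(\pi_M)$; as $\pi_M\notin V$, the restriction $(I-P)|_V$ is a linear bijection of $V$, with finite $\ell_1$-operator-norm inverse bound $C_M$. Hence $\|\mu-\pi_M\|_1\le C_M\|\mu(\tilde P-P)\|_1<C_M\sqrt{2\delta}$, so for every $q\in E_M$ we get $\mu(Q=q)\ge\pi_M(q)-\|\mu-\pi_M\|_1>\pi_M(q)-C_M\sqrt{2\delta}$. Choosing $\delta$ small enough that $C_M\sqrt{2\delta}\le\eta\,\pi_{\min}$ (e.g.\ $\delta=\eta^2\pi_{\min}^2/(2C_M^2)$, depending only on $\eta$ and $M$) yields $\mu(Q=q)>\pi_M(q)-\eta\,\pi_{\min}\ge(1-\eta)\pi_M(q)$, as claimed.

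The step I expect to require the most care is the linear-algebra fact that $(I-P)|_V$ is invertible even though the relaxed notion of irreducibility used here permits a transient prefix of states; this rests on the uniqueness of the stationary distribution (already invoked in the proof of the Schnorr–Stimm lemma) rather than on $\tilde P$ being irreducible, which it need not be. A convenient byproduct of running the argument over all of $Q$ in this way is that it simultaneously shows that, once $\dim_{\mathrm{FS}}(X)$ is close enough to $1$, the run of $M$ on $X$ must eventually enter $E_M$ — otherwise some $\mu\in\mathcal{W}_M(X)$ would vanish on $E_M$, contradicting the displayed lower bound — so no separate argument about the topological entropy of the transient part is needed.
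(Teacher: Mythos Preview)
Your proof is correct and takes a genuinely different route from the paper's. The paper argues combinatorially via block frequencies: it invokes the entropy-rate characterization of finite-state dimension from~\cite{bourke2005entropy} to guarantee that, for $\dim_{\mathrm{FS}}(X)>1-\delta$, every word of a suitably chosen length~$N$ appears with min-frequency close to $2^{-N}$, and then counts visits to states within each such block using an ergodic-theorem estimate for the fair chain. Your argument instead stays internal to the paper's own machinery: you feed Theorem~\ref{thm:markovchainfsd} back into the lemma, use the flow-conservation identity to recognise the state marginal of any $\mu\in\mathcal{W}_M(X)$ as a stationary vector for a perturbed transition matrix $\tilde P$, and then bound $\|\mu-\pi_M\|_1$ by inverting $(I-P)$ on the zero-sum hyperplane (relying on uniqueness of the stationary distribution, which holds under the paper's relaxed notion of irreducibility because the substochastic block on the transient states has spectral radius $<1$). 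This is essentially a sharpened, self-contained variant of the perturbation idea behind Lemma~\ref{lem:convergenceofstationarydistributions}. What your approach buys is an explicit quantitative dependence $\delta\asymp\eta^2\pi_{\min}^2/C_M^2$ and the pleasant byproduct that the run on~$X$ must enter $E_M$; what the paper's approach buys is independence from Theorem~\ref{thm:markovchainfsd}, so the lemma could in principle be cited before that theorem is available.
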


\begin{proof}
	Given $\eta$, let us fix a small $\varepsilon$, to be specified later. By ergodicity of the Markov chain and the definition of the invariant measure, there exists an $N$ such that, if we run the Markov chain during $N$ steps on a random $u \in \{0,1\}^N$, with probability $>1-\varepsilon$, each state~$q$ of the ergodic component of $M$ gets visited at least $(1-\varepsilon)\pi_M(q) N$ times, regardless of the starting state. Let $U$ be the set of words~$u$ which make this happen (hence $|U| \geq (1-\varepsilon) 2^N$). 
	
	As proven in~\cite[Theorem 6.1]{bourke2005entropy}, for $\delta$ small enough, if $\dim_{\mathrm{FS}}(X) > 1-\delta$, each word of length~$N$ must appear in~$X$ is min-frequency at least $(1-\varepsilon) 2^N$. Thus, for such a $\delta$, words of $U$ appear in~$X$ with min-frequency at least $(1-\varepsilon)^2$.  By definition of $U$, this means that in the run of $M$ on any $X$ with $\dim_{\mathrm{FS}}(X) > 1-\delta$, each state~$q$ of the ergodic component of $M$ gets visited with min-frequency at least $(1-\varepsilon)^3 \pi_M(q)$. Thus, taking $\varepsilon$ so that $(1-\varepsilon)^3 = (1-\eta)$ and the corresponding~$\delta$, we get the desired result. 
\end{proof}

We now present the main result of this section, which generalizes Agafonov’s theorem from normality to finite-state dimension.

\begin{theorem}
	\label{thm:agafonovforfsd}
	For every irreducible selector $ \mathcal{S} =(Q, q_0, \delta, S)$, there exists a constant 
	$ \lambda_{\mathcal{S}} \in (0,1) $ such that for every $ \varepsilon > 0 $  
	there exists $ \delta > 0 $ satisfying the following:  
	for every sequence $ X $ with $ \dim_{\mathrm{FS}}(X) \ge 1 - \delta $,
	\[
	\lambda_{\mathcal{S}} \cdot \dim_{\mathrm{FS}}(\mathcal{S}(X))
	+ (1 - \lambda_{\mathcal{S}}) \cdot \Dim_{\mathrm{FS}}(\mathcal{S}^{c}(X))
	\ge \dim_{\mathrm{FS}}(X) - \varepsilon,
	\]
	where
	$
	\lambda_{\mathcal{S}} = \sum_{q \in S} \pi_{\mathcal{S}}(q)
	$
	is the total stationary probability of the selecting states under the stationary distribution  
	$ \pi_{\mathcal{S}} $ of $ M_{\mathcal{S}} $
	(and thus $ 0 < \lambda_{\mathcal{S}} < 1 $ whenever the ergodic component of $ M_{\mathcal{S}} $ 
	contains both selecting and non-selecting states).
\end{theorem}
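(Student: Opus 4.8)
The plan is to derive the inequality from the Markov-chain characterizations of Section~\ref{sec:markovchainfsd}. Write $d_1 = 1-\dim_{\mathrm{FS}}(\mathcal{S}(X))$ and $d_2 = 1-\Dim_{\mathrm{FS}}(\mathcal{S}^c(X))$; the asserted inequality is equivalent to $1-\dim_{\mathrm{FS}}(X) \ge \lambda_{\mathcal{S}} d_1 + (1-\lambda_{\mathcal{S}})d_2 - \varepsilon$, so by Theorem~\ref{thm:markovchainfsd} it suffices to exhibit, for every $\varepsilon'>0$, an $M\in\mathcal{M}$ and $\mu\in\mathcal{W}_M(X)$ with $D_{\mathrm{KL}}(\mu(E\mid Q)\,\|\,\pi_M(E\mid Q)) \ge \lambda_{\mathcal{S}} d_1 + (1-\lambda_{\mathcal{S}})d_2 - \varepsilon'$, under the assumption that $\dim_{\mathrm{FS}}(X)$ is sufficiently close to $1$. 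First I fix near-witnesses: by Theorem~\ref{thm:markovchainfsd} there is a chain $M_1$ and $\mu_1^\ast\in\mathcal{W}_{M_1}(\mathcal{S}(X))$, realized along a subsequence of run-lengths $m_j\to\infty$ of $M_1$ on $\mathcal{S}(X)$, with $D_{\mathrm{KL}}(\mu_1^\ast(E\mid Q)\,\|\,\pi_{M_1}(E\mid Q)) \ge d_1-\varepsilon'$; and by Theorem~\ref{thm:markovchainstrongfsd} there is a chain $M_2$ for which \emph{every} $\nu_2\in\mathcal{W}_{M_2}(\mathcal{S}^c(X))$ satisfies $D_{\mathrm{KL}}(\nu_2(E\mid Q)\,\|\,\pi_{M_2}(E\mid Q))\ge d_2-\varepsilon'$.

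Next I form the product automaton whose states are triples $(q,r_1,r_2)$ with $q$ a state of $\mathcal{S}$, $r_1$ of $M_1$, $r_2$ of $M_2$; on reading a bit $b$ the $\mathcal{S}$-coordinate always advances, and $b$ is additionally fed to the $M_1$-coordinate when $q\in S$ and to the $M_2$-coordinate otherwise. Passing to the ergodic component reachable along $X$ (exactly as in the proof of Lemma~\ref{lem:irr_equiv}) yields a chain $M\in\mathcal{M}$; since $M$ is fair, $\pi_M(E=e\mid Q=v)=\tfrac12$, so for any $\mu\in\mathcal{W}_M(X)$ one has $D_{\mathrm{KL}}(\mu(E\mid Q)\,\|\,\pi_M(E\mid Q)) = 1 - H(E_\mu\mid Q_\mu)$ as in the proof of Theorem~\ref{thm:markovchainfsd}. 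By construction the subrun of $M$ on $X$ restricted to selecting steps is precisely the run of $M_1$ on $\mathcal{S}(X)$, and the subrun on non-selecting steps is the run of $M_2$ on $\mathcal{S}^c(X)$. Let $n_1(n)$ denote the number of selecting steps among the first $n$; by Lemma~\ref{lem:min-frequency}, if $\dim_{\mathrm{FS}}(X)\ge 1-\delta$ with $\delta$ small then every selecting and every non-selecting state of $\mathcal{S}$ in its ergodic component is visited with positive lower frequency, so $n_1$ is nondecreasing, tends to infinity, takes every large value, and $n_1(n)/n$ stays within $\eta$ of $\lambda_{\mathcal{S}}$ for large $n$, with $\eta\to0$ as $\delta\to0$.

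Now comes the alignment. For each large $j$ pick a time $n(j)$ with $n_1(n(j))=m_j$; by compactness pass to a subsequence along which $P^M_{n(j)}\to\mu$, $n_1(n(j))/n(j)\to\alpha\in(\lambda_{\mathcal{S}}-\eta,\lambda_{\mathcal{S}}+\eta)$, and $P^{M_2}_{n-n_1(n(j))}\to\nu_2\in\mathcal{W}_{M_2}(\mathcal{S}^c(X))$. Because the selecting-step empirical joint distribution of $M$ equals $P^{M_1}_{m_j}\to\mu_1^\ast$, the conditional law under $\mu$ of the $M_1$-state together with the bit read, given that the current $\mathcal{S}$-coordinate is selecting, is exactly $\mu_1^\ast$; likewise the corresponding conditional for $M_2$ is $\nu_2$. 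Splitting $H(E_\mu\mid Q_\mu)$ according to whether the current $\mathcal{S}$-coordinate is selecting, and using that conditioning on the full triple $(q,r_1,r_2)$ can only decrease the conditional entropy of the next bit relative to conditioning on $r_1$ alone (resp. $r_2$ alone), gives $H(E_\mu\mid Q_\mu)\le \alpha H(E_{\mu_1^\ast}\mid Q_{\mu_1^\ast}) + (1-\alpha)H(E_{\nu_2}\mid Q_{\nu_2})$. Converting each term back via $H = 1 - D_{\mathrm{KL}}$ and the near-witness bounds yields $H(E_\mu\mid Q_\mu) \le 1 - \alpha d_1 - (1-\alpha)d_2 + \varepsilon'$, hence $D_{\mathrm{KL}}(\mu(E\mid Q)\,\|\,\pi_M(E\mid Q)) \ge \alpha d_1 + (1-\alpha)d_2 - \varepsilon' \ge \lambda_{\mathcal{S}} d_1 + (1-\lambda_{\mathcal{S}})d_2 - \varepsilon' - \eta$ (using $0\le d_1,d_2\le1$ and $|\alpha-\lambda_{\mathcal{S}}|<\eta$). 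Choosing $\varepsilon'$ and $\delta$ (hence $\eta$) so that $\varepsilon'+\eta\le\varepsilon$ completes the proof.

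The main obstacle is the simultaneous control of the two coordinates. The selected subsequence only supplies a divergence witness along a sparse set of run-lengths $m_j$, so I must slide the cluster-point times $n(j)$ to those lengths; but then the unselected subsequence is pinned at the induced lengths $n(j)-m_j$, which is precisely why the theorem must use the \emph{strong} finite-state dimension of $\mathcal{S}^c(X)$ (a witness that works at \emph{all} large lengths). The second delicate point is the hypothesis $\dim_{\mathrm{FS}}(X)\ge1-\delta$: it enters exactly once, through Lemma~\ref{lem:min-frequency}, to force the visit-frequency of the selecting states to be within $\eta$ of $\lambda_{\mathcal{S}}$ so that $\alpha\approx\lambda_{\mathcal{S}}$, and it also guarantees $n_1(n)\to\infty$ so that the alignment $n_1(n(j))=m_j$ is realizable; verifying that the restricted product chain is genuinely irreducible, fair, and in $\mathcal{M}$ is routine given Lemma~\ref{lem:irr_equiv}.
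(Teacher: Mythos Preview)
Your proof is correct and shares the same skeleton as the paper's: build the product of $\mathcal{S}$ with a witness $M_1$ for $\dim_{\mathrm{FS}}(\mathcal{S}(X))$ and a witness $M_2$ for $\Dim_{\mathrm{FS}}(\mathcal{S}^c(X))$, use Lemma~\ref{lem:min-frequency} to pin the selection frequency near $\lambda_{\mathcal{S}}$, and exploit the asymmetry ``infinitely often'' versus ``eventually always'' to align the two witnesses. The execution differs in two respects. First, the paper reverts to the martingale formulation for the combination step (taking finite-state gamblers $G_1,G_2$ and multiplying their capitals along the selector), whereas you remain entirely in the Markov-chain/entropy language of Section~\ref{sec:markovchainfsd}, decomposing $H(E_\mu\mid Q_\mu)$ over selecting versus non-selecting states and invoking the monotonicity $H(b\mid q,r_1,r_2)\le H(b\mid r_i)$; this is a clean alternative that avoids constructing a gambler explicitly. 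Second, your derivation of the frequency bound is more direct: you observe that the lower bound $\mu(q)\ge(1-\eta)\pi_{\mathcal{S}}(q)$ for all $q$ from Lemma~\ref{lem:min-frequency} already forces $\|\mu-\pi_{\mathcal{S}}\|_1\le 2\eta$ (since both are probability vectors), whereas the paper routes through Theorem~\ref{thm:markovchainfsd}, Pinsker's inequality, and Lemma~\ref{lem:convergenceofstationarydistributions} before invoking Lemma~\ref{lem:min-frequency}. The paper's route makes the dependence on the Markov-chain characterization more visible, but yours is shorter; both are valid.
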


\begin{proof}
	Let $ \mathcal{S} $ be a fixed selector and let $ M_{\mathcal{S}} $ denote its corresponding Markov chain. Let $ \lambda_{\mathcal{S}} = \sum_{q \in S} \pi_{\mathcal{S}}(q) $ denote the total stationary probability of the selecting states under the stationary distribution $ \pi_{\mathcal{S}} $ of $ M_{\mathcal{S}} $. Assume that the sequence $ X $ satisfies $ \dim_{\mathrm{FS}}(X) \ge 1 - \delta $ for some sufficiently small $ \delta > 0 $. From the Markov-chain characterization of finite-state dimension established earlier, for every limiting distribution $ \mu \in \mathcal{W}_{M_{\mathcal{S}}}(X) $ we have $ D_{\mathrm{KL}}\!\left(E_\mu\mid Q_\mu\,\middle\|\,E_{\pi_{\mathcal{S}}}\mid Q_{\pi_{\mathcal{S}}}\right) < \delta$. Using the Pinsker’s inequality, for every state $ q $, $ \| \mu(\cdot \mid q) - \pi_{\mathcal{S}}(\cdot \mid q) \|_{1} \le \sqrt{2\delta/\mu(q) }$. 
	
	Using Lemma~\ref{lem:min-frequency}, as long as $\delta$ is small enough, we have that $\mu(q)$ is at least $c/2$ where $c>0$ is the minimum value of $\pi_{\mathrm{S}}$ over all states. Thus, the above inequality gives 
	\[
	\| \mu(\cdot \mid q) - \pi_{\mathcal{S}}(\cdot \mid q) \|_{1} \le 2 \sqrt{\delta/c }
	\]
	where now the upper bound does not depend on~$q$. 
	
	Since each of these conditional distributions has support on two symbols (the outgoing edges labeled $0$ and $1$), the $ L_{\infty} $ deviation between them satisfies \[ \| \mu(\cdot \mid q) - \pi_{\mathcal{S}}(\cdot \mid q) \|_{\infty} = \tfrac{1}{2} \| \mu(\cdot \mid q) - \pi_{\mathcal{S}}(\cdot \mid q) \|_{1} \le \sqrt{\delta/c} . \] Hence, the transition matrices $ P_{\mu} $ and $ P_{\mathcal{S}} $ corresponding respectively to these conditional probabilities differ in each entry by at most $ \Delta = \sqrt{\delta/c}. $ 
	Fix such a limiting distribution $ \mu \in \mathcal{W}_{M_{\mathcal{S}}}(X) $ and consider the Markov chain $ M_{\mu} $ whose transition matrix is $ P_{\mu} $. Since $ D_{\mathrm{KL}}\!\left(E_\mu\mid Q_\mu\,\middle\|\,E_{\pi_{\mathcal{S}}}\mid Q_{\pi_{\mathcal{S}}}\right) < \delta $, the matrices $ P_{\mu} $ and $ P_{\mathcal{S}} $ are $ \Delta $-close in every entry. By Lemma \ref{lem:convergenceofstationarydistributions}, for every $ \varepsilon > 0 $ there exists a $ \Delta_{0} > 0 $ depending only on $ M_{\mathcal{S}} $ such that if $\|P_{\mu} - P_{\mathcal{S}}\|_{\infty} < \Delta_{0}$, then the stationary distributions $ \pi_{\mu} $ and $ \pi_{\mathcal{S}} $ satisfy $ \| \pi_{\mu} - \pi_{\mathcal{S}} \|_{1} < \varepsilon. $ 
	Choosing $ \delta $ sufficiently small so that $ \Delta = \sqrt{\delta/c}  < \Delta_{0} $, we conclude that the stationary distribution of $ M_{\mu} $ is within $ \varepsilon $ of $ \pi_{\mathcal{S}} $ in the $ L_{1} $-norm. By construction of $ M_{\mu} $, the stationary distribution of $ M_{\mu} $ is precisely the marginal $ Q_{\mu} $ on the states induced by $ \mu $. Hence, $ \| Q_{\mu} - \pi_{\mathcal{S}} \|_{1} < \varepsilon$. It follows that for all sufficiently large $ n $, the empirical frequency $ \frac{1}{n} N_{\mathrm{sel}}(n) $ of visits to selecting states during the run of $ M_{\mathcal{S}} $ on $ X $ satisfies \[ \Bigl|\tfrac{1}{n} N_{\mathrm{sel}}(n) - \lambda_{\mathcal{S}}\Bigr| < \varepsilon, \] where $ N_{\mathrm{sel}}(n) $ denotes the number of positions among the first $ n $ steps in which the selector is in a selecting state. We now construct the martingale that witnesses the required inequality. 
	Let $ s > \dim_{\mathrm{FS}}(\mathcal{S}(X)) $ and $ s' > \Dim_{\mathrm{FS}}(\mathcal{S}^{c}(X)) $. By the definitions of finite-state and strong finite-state dimension, there exist martingales $ G_{1} $ and $ G_{2} $ such that $ G_{1}(\mathcal{S}(X){\upharpoonright}n) \ge 2^{(1-s)n} $ for infinitely many $ n $, and $ G_{2}(\mathcal{S}^{c}(X){\upharpoonright}n) \ge 2^{(1-s')n} $ for all but finitely many $ n $. We combine $ G_{1} $ and $ G_{2} $ with the selector automaton $ \mathcal{S} $ to define a new martingale $ G $. The state of $ G $ is the triple consisting of the current states of $ G_{1} $, $ G_{2} $, and $ \mathcal{S} $. At each step, suppose the selector automaton $ \mathcal{S} $ is in state $ q $. If $ q $ is a selecting state, let $ p_{1,q} $ denote the betting distribution prescribed by $ G_{1} $ in its current state; if $ q $ is a non-selecting state, let $ p_{2,q} $ denote the betting distribution prescribed by $ G_{2} $ in its current state. The combined martingale $ G $ places its bet according to \[ p_G(q) = \begin{cases} p_{1,q}, & \text{if } q \in S,\\[4pt] p_{2,q}, & \text{if } q \notin S. \end{cases} \] When the next bit $ b \in \{0,1\} $ of the sequence $ X $ is revealed, $ G $ multiplies its current capital by $ 2 p_G(q)(b) $, updates the state of $ \mathcal{S} $ according to its transition on input $ b $, and simultaneously updates the internal state of $ G_{1} $ (if $ q \in S $) or $ G_{2} $ (if $ q \notin S $) according to their respective transition functions on the same input. Thus, at every step, $ G $ mirrors the run of the selector automaton, using $ G_{1} $ to place bets during selections and $ G_{2} $ otherwise. This construction ensures that the capital of $ G $ after reading the first $ n $ bits of $ X $ equals the product of the respective capital contributions from the segments on which $ \mathcal{S} $ selects and does not select.   Let $ D_G(X{\upharpoonright}n) $ denote the capital of $ G $ after reading the first $ n $ bits of $ X $. For all sufficiently large $ n $, the number of steps taken in selecting states is between $ (\lambda_{\mathcal{S}} - \varepsilon)n $ and $ (\lambda_{\mathcal{S}} + \varepsilon)n $, and the number of steps in non-selecting states is between $ ((1-\lambda_{\mathcal{S}}) - \varepsilon)n $ and $ ((1-\lambda_{\mathcal{S}}) + \varepsilon)n $. 
	Consequently, for infinitely many $ n $, $ D_G(X{\upharpoonright}n) \ge 2^{(1-s)(\lambda_{\mathcal{S}} - \varepsilon)n} \cdot 2^{(1-s')((1-\lambda_{\mathcal{S}}) - \varepsilon)n}. $ This gives \[ \frac{1}{n}\log D_G(X{\upharpoonright}n) \ge 1 - \lambda_{\mathcal{S}} s - (1-\lambda_{\mathcal{S}}) s' - 2\varepsilon. \] Hence, the martingale $ G $ succeeds on $ X $ with rate at least $ 2^{(1 - [\lambda_{\mathcal{S}} s + (1-\lambda_{\mathcal{S}}) s'] - 2\varepsilon)n} $. Since $ s > \dim_{\mathrm{FS}}(\mathcal{S}(X)) $ and $ s' > \Dim_{\mathrm{FS}}(\mathcal{S}^{c}(X)) $ were arbitrary, we obtain \[ \dim_{\mathrm{FS}}(X) \le \lambda_{\mathcal{S}}\cdot \dim_{\mathrm{FS}}(\mathcal{S}(X)) + (1-\lambda_{\mathcal{S}})\cdot\Dim_{\mathrm{FS}}(\mathcal{S}^{c}(X))+2\epsilon. \] On setting $2\epsilon$ to $\epsilon$, the proof is complete. \end{proof}
We next demonstrate that the classical result of Agafonov on normal numbers follows as a special case of our general theorem for finite-state dimension.

\begin{proof}[Proof of Theorem~\ref{thm:agafonovfornormal} using Theorem~\ref{thm:agafonovforfsd}]
	Let $ X $ be a normal number, so $ \dim_{\mathrm{FS}}(X)=1 $.
	We show that for every finite-state selector $ \mathcal{S} $, the subsequence $ \mathcal{S}(X) $ is normal.
	By Lemma~\ref{lem:irr_selector_equiv}, we may assume that $ \mathcal{S} $ is irreducible, so its Markov chain $ M_{\mathcal{S}} $ has a unique ergodic component. If all states in this component are non-selecting, then $ \mathcal{S}(X) $ is finite;  
	if all are selecting, $ \mathcal{S}(X) $ coincides with $ X $ after a finite prefix.  
	Hence we may assume $ 0<\lambda_{\mathcal{S}}<1 $, where  
	$ \lambda_{\mathcal{S}}=\sum_{q\in S}\pi_{\mathcal{S}}(q) $ is the stationary mass of selecting states. By Theorem~\ref{thm:agafonovforfsd}, for every $ \varepsilon>0 $ there exists $ \delta>0 $ such that
	$
	\lambda_{\mathcal{S}}\dim_{\mathrm{FS}}(\mathcal{S}(X))
	+(1-\lambda_{\mathcal{S}})\Dim_{\mathrm{FS}}(\mathcal{S}^{c}(X))
	\ge 1-\varepsilon.
	$
	Since $ 0<\lambda_{\mathcal{S}} <1 $ and both dimensions on the right are at most $1$,
	this inequality can hold for arbitrarily small $\varepsilon$ only if
	$\dim_{\mathrm{FS}}(\mathcal{S}(X))=1$.
	Thus every subsequence selected by a finite-state selector is normal.
	
	Conversely, if every such subsequence is normal, then the selector that chooses every bit
	yields $ X $ itself, implying that $ X $ is normal.
\end{proof}

%\begin{theorem}
%	Let $ \mathcal{S}_{1}, \mathcal{S}_{2}, \ldots, \mathcal{S}_{k} $ be a collection of selectors 
%	sharing the same state space $ Q $, initial state $ q_{I} $, 
%	and transition function $ \delta $, such that 
%	\[
%	\mathcal{S}_{1} \,\sqcup\, \mathcal{S}_{2} \,\sqcup\, \cdots \,\sqcup\, \mathcal{S}_{k} = Q.
%	\]
%	For every $ \varepsilon > 0 $, there exists $ \delta > 0 $ such that for every sequence 
%	$ x $ with $ \dim_{\mathrm{FS}}(X) \ge 1 - \delta $,
%	\[
%	\lambda_{i} \cdot \dim_{\mathrm{FS}}(\mathcal{S}_{i}(X))
%	\;+\;
%	\sum_{j \ne i} \lambda_{j} \cdot \Dim_{\mathrm{FS}}(\mathcal{S}_{j}(X))
%	\;\ge\;
%	\dim_{\mathrm{FS}}(X) - \varepsilon,
%	\]
%	where $ \lambda_{j} = \sum_{q \in S_{j}} \pi_{\mathcal{S}}(q) $ 
%	is the total stationary probability of the states belonging to selector $ \mathcal{S}_{j} $
%	in the corresponding Markov chain.
%\end{theorem}

In the proof of Theorem \ref{thm:agafonovforfsd}, the $\varepsilon$ term arises because, for a given selector~$\mathcal{S}$ and sequence~$X$, 
the limiting distribution of the run of $M_{\mathcal{S}}$ on $X$ need not exist—there may be multiple limit points in 
$\mathcal{W}_{M_{\mathcal{S}}}(X)$.  
However, if we restrict our attention to selectors whose induced Markov chains have a unique limiting distribution on the given sequence $X$, 
the argument immediately yields a sharper statement without the $\varepsilon$ term.  
We note this special case below.

\begin{theorem}
	\label{thm:special_case}
	Let $\mathcal{S}=(Q, q_0, \delta, S)$ be an irreducible selector with corresponding Markov chain $M_{\mathcal{S}}$, and let $X$ be a sequence such that 
	the empirical distribution of states in the run of $M_{\mathcal{S}}$ on $X$ converges to its stationary distribution $\pi_{\mathcal{S}}$. Then, \[
	\lambda_{\mathcal{S}}\dim_{\mathrm{FS}}(\mathcal{S}(X))
	+(1-\lambda_{\mathcal{S}})\Dim_{\mathrm{FS}}(\mathcal{S}^{c}(X))
	\ge \dim_{\mathrm{FS}}(X).
	\]
\end{theorem}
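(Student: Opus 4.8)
The plan is to extract the martingale‑combination half of the proof of Theorem~\ref{thm:agafonovforfsd} and feed it the frequency of visits to selecting states directly from the hypothesis, which removes every source of the $\varepsilon$‑slack. Concretely, write $Q_n$ for the empirical state distribution of the run of $M_{\mathcal{S}}$ on $X$. The hypothesis $Q_n\to\pi_{\mathcal{S}}$ gives at once
\[
\frac{1}{n}N_{\mathrm{sel}}(n)=\sum_{q\in S}Q_n(q)\;\longrightarrow\;\sum_{q\in S}\pi_{\mathcal{S}}(q)=\lambda_{\mathcal{S}},
\]
and symmetrically $\tfrac{1}{n}\bigl(n-N_{\mathrm{sel}}(n)\bigr)\to 1-\lambda_{\mathcal{S}}$. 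In particular $\lambda_{\mathcal{S}}>0$, so $\mathcal{S}(X)$ is infinite. Note that none of the approximation lemmas (Pinsker's inequality, Lemma~\ref{lem:convergenceofstationarydistributions}, Lemma~\ref{lem:min-frequency}) nor the Markov‑chain characterization of finite‑state dimension are needed here, precisely because the state‑frequency convergence is handed to us as an assumption rather than derived from $\dim_{\mathrm{FS}}(X)$ being close to $1$.

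Next I would fix $s>\dim_{\mathrm{FS}}(\mathcal{S}(X))$ and $s'>\Dim_{\mathrm{FS}}(\mathcal{S}^{c}(X))$, take witnessing martingales $G_1$ with $G_1(\mathcal{S}(X){\upharpoonright}m)\ge 2^{(1-s)m}$ for infinitely many $m$ and $G_2$ with $G_2(\mathcal{S}^{c}(X){\upharpoonright}m)\ge 2^{(1-s')m}$ for all sufficiently large $m$, and form the combined martingale $G$ exactly as in the proof of Theorem~\ref{thm:agafonovforfsd}: its state carries the current states of $G_1$, $G_2$, and $\mathcal{S}$; it bets with $G_1$ while $\mathcal{S}$ is in a selecting state and with $G_2$ otherwise; and it updates only the component it just used. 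Then the capital factors as $D_G(X{\upharpoonright}n)=G_1(\mathcal{S}(X){\upharpoonright}N_{\mathrm{sel}}(n))\cdot G_2(\mathcal{S}^{c}(X){\upharpoonright}(n-N_{\mathrm{sel}}(n)))$.

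To witness success at the right rate I would choose the indices carefully. Since $N_{\mathrm{sel}}(n)\to\infty$ and changes by $0$ or $1$ at each step, it attains every large integer value; for each good index $m$ of $G_1$ let $n_m$ be the first $n$ with $N_{\mathrm{sel}}(n)=m$. Then $n_m\to\infty$ and $n_m-m=n_m-N_{\mathrm{sel}}(n_m)\to\infty$, so for $m$ large both factors are on target and
\[
\frac{1}{n_m}\log D_G(X{\upharpoonright}n_m)\;\ge\;(1-s)\,\frac{N_{\mathrm{sel}}(n_m)}{n_m}+(1-s')\,\frac{n_m-N_{\mathrm{sel}}(n_m)}{n_m}\;\xrightarrow[m\to\infty]{}\;1-\lambda_{\mathcal{S}}s-(1-\lambda_{\mathcal{S}})s',
\]
using the first display restricted to the subsequence $(n_m)$. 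Hence $\limsup_n \tfrac1n\log D_G(X{\upharpoonright}n)\ge 1-\lambda_{\mathcal{S}}s-(1-\lambda_{\mathcal{S}})s'$, i.e.\ $\dim_{\mathrm{FS}}(X)\le\lambda_{\mathcal{S}}s+(1-\lambda_{\mathcal{S}})s'$; letting $s\downarrow\dim_{\mathrm{FS}}(\mathcal{S}(X))$ and $s'\downarrow\Dim_{\mathrm{FS}}(\mathcal{S}^{c}(X))$ yields the theorem. The only point that needs care — and exactly where the stronger hypothesis pays off — is this last bookkeeping: because $N_{\mathrm{sel}}(n)/n$ genuinely converges (rather than merely lying within a fixed $\varepsilon$ of $\lambda_{\mathcal{S}}$), the deviation terms above are $o(1)$ and vanish in the limit, so no residual $\varepsilon$ survives; aligning the "infinitely often" clause for $G_1$ and the "eventually" clause for $G_2$ with this convergence along $(n_m)$ is the remaining routine step.
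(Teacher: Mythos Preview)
Your proposal is correct and follows the same approach as the paper, which simply remarks that the proof ``follows directly from the argument of the main theorem, since the convergence assumption eliminates the need for the error term arising from the limiting-distribution approximation.'' You have in fact been more explicit than the paper about the index bookkeeping (choosing $n_m$ so that $N_{\mathrm{sel}}(n_m)=m$ at good indices of $G_1$), which is the right way to make rigorous the alignment of the ``infinitely often'' clause for $G_1$ with the ``eventually'' clause for $G_2$; one small quibble is that $\lambda_{\mathcal{S}}>0$ is not a consequence of the convergence hypothesis but rather an implicit nondegeneracy assumption needed for $\dim_{\mathrm{FS}}(\mathcal{S}(X))$ to be defined.
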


The proof follows directly from the argument of the main theorem, since the convergence assumption eliminates the need for 
the error term arising from the limiting-distribution approximation. The general results established above not only capture structural relationships between the finite-state dimensions of a sequence and its selected subsequences, but also serve as useful tools for deriving quantitative lower bounds on the finite-state dimension of sequences selected by finite automata.  The following lemmas illustrate such applications.
From Theorem \ref{thm:special_case}, we can obtain a simple but useful lower bound on the finite-state dimension of the subsequence selected by any selector.

\begin{lemma}
	\label{lem:lower_bound_selector}
	Let $\mathcal{S}=(Q, q_0, \delta, S)$ be an irreducible selector whose induced Markov chain $M_{\mathcal{S}}$ has stationary distribution $\pi_{\mathcal{S}}$, 
	and let $X$ be any sequence such that the empirical distribution of states during the run of $M_{\mathcal{S}}$ on $X$ 
	converges to $\pi_{\mathcal{S}}$.  
	Then,
	$
	\dim_{\mathrm{FS}}\!\bigl(\mathcal{S}(X)\bigr)
	\;\ge\;
	\frac{1}{\lambda_{\mathcal{S}}}
	\Bigl(\,
	\dim_{\mathrm{FS}}(X)
	- (1-\lambda_{\mathcal{S}})
	\,\Bigr)
	$
	where $\lambda_S$ is again the stationary mass of the selecting states. 
\end{lemma}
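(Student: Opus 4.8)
The plan is to derive this directly from Theorem~\ref{thm:special_case}, which under the stated convergence hypothesis gives
\[
\lambda_{\mathcal{S}}\dim_{\mathrm{FS}}(\mathcal{S}(X))
+(1-\lambda_{\mathcal{S}})\Dim_{\mathrm{FS}}(\mathcal{S}^{c}(X))
\ge \dim_{\mathrm{FS}}(X).
\]
First I would invoke the universal bound $\Dim_{\mathrm{FS}}(\mathcal{S}^{c}(X)) \le 1$, which holds for every sequence (including the possibly finite or eventually-periodic unselected subsequence — in the degenerate cases one either discards the term or the inequality is only more favorable). Substituting this into the displayed inequality and keeping only the $\dim_{\mathrm{FS}}(\mathcal{S}(X))$ term on the left yields
\[
\lambda_{\mathcal{S}}\dim_{\mathrm{FS}}(\mathcal{S}(X)) \;\ge\; \dim_{\mathrm{FS}}(X) - (1-\lambda_{\mathcal{S}}).
\]

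Next I would divide through by $\lambda_{\mathcal{S}}$, which is legitimate precisely because the selector is irreducible with a non-trivial selecting set, so $\lambda_{\mathcal{S}} = \sum_{q \in S}\pi_{\mathcal{S}}(q) \in (0,1)$ is strictly positive (this positivity is guaranteed by the irreducibility assumption and the fact that, in a fair irreducible chain, every state of the ergodic component carries positive stationary mass). This immediately produces the claimed bound
\[
\dim_{\mathrm{FS}}(\mathcal{S}(X)) \;\ge\; \frac{1}{\lambda_{\mathcal{S}}}\Bigl(\dim_{\mathrm{FS}}(X) - (1-\lambda_{\mathcal{S}})\Bigr).
\]

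There is essentially no obstacle here: the only points that require a word of care are (i) checking that Theorem~\ref{thm:special_case} applies, i.e.\ that the empirical state distribution of $M_{\mathcal{S}}$ on $X$ converges to $\pi_{\mathcal{S}}$ — but this is exactly the hypothesis of the lemma — and (ii) handling the corner cases where $\mathcal{S}(X)$ or $\mathcal{S}^{c}(X)$ is finite, where one adopts the natural convention that the finite-state dimension of a finite or ultimately periodic string is defined appropriately (or simply notes that such cases force $\lambda_{\mathcal{S}} \in \{0,1\}$, which are excluded by the non-triviality of $S$ under irreducibility). Modulo these remarks, the lemma is a one-line rearrangement of the theorem.
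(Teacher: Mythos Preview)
Your proposal is correct and follows exactly the same approach as the paper's proof: invoke Theorem~\ref{thm:special_case}, bound $\Dim_{\mathrm{FS}}(\mathcal{S}^{c}(X))$ above by~$1$, and divide through by $\lambda_{\mathcal{S}}$. The paper's version is simply terser, omitting your parenthetical remarks about corner cases and the positivity of $\lambda_{\mathcal{S}}$.
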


\begin{proof}
	From Theorem~\ref{thm:special_case},
	$
	\lambda_{\mathcal{S}} \dim_{\mathrm{FS}}(\mathcal{S}(X))
	+ (1-\lambda_{\mathcal{S}}) \Dim_{\mathrm{FS}}(\mathcal{S}^{c}(X))
	\ge \dim_{\mathrm{FS}}(X).
	$
	Since $\Dim_{\mathrm{FS}}(\mathcal{S}^{c}(X)) \le 1$, we have
	$
	\lambda_{\mathcal{S}} \dim_{\mathrm{FS}}(\mathcal{S}(X))
	\ge \dim_{\mathrm{FS}}(X) - (1-\lambda_{\mathcal{S}}),
	$
	and dividing both sides by $\lambda_{\mathcal{S}}$ gives the desired inequality.
\end{proof}

As an application, we observe that Lemma~17 from \cite{Nandakumar2024}, which established a lower bound on the finite-state dimension of arithmetic progression subsequences, is a special case of Lemma~\ref{lem:lower_bound_selector}. Before doing so, we clarify the notation. For any sequence $X = x_0 x_1 x_2 \ldots$ over $\{0,1\}$ and integers $d \ge 1$ and $0 \le j < d$, define $X^{(d,j)} = x_j\,x_{j+d}\,x_{j+2d}\,x_{j+3d}\,\ldots$, the \emph{$j$-th arithmetic progression subsequence} of $X$ with common difference $d$. Each such subsequence can be selected by a simple $d$-state selector automaton $\mathcal{S}_{d,j}$ whose states form a directed cycle of length $d$. Starting from position $0$, the automaton moves deterministically from state $i$ to $(i+1) \bmod d$ at each step, selecting exactly when it enters state $j$. The corresponding Markov chain $M_{\mathcal{S}_{d,j}}$ is therefore a $d$-cycle, which is irreducible with stationary distribution uniform over the $d$ states, that is, $\pi_{\mathcal{S}_{d,j}}(q) = \tfrac{1}{d}$ for all $q$, so that $\lambda_{\mathcal{S}_{d,j}} = \tfrac{1}{d}$.

\begin{lemma}[\cite{Nandakumar2024}]
	\label{lem:arith_prog}
	For every infinite sequence $X \in \{0,1\}^\infty$, integer $d \ge 1$, and $j \in \{0,1,\ldots,d-1\}$,
	$
	\dim_{\mathrm{FS}}(X^{(d,j)})
	\;\ge\;
	d\!\left(
	\dim_{\mathrm{FS}}(X) - \frac{d-1}{d}
	\right).
	$
\end{lemma}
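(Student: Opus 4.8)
The plan is to observe that Lemma~\ref{lem:arith_prog} is simply the specialization of Lemma~\ref{lem:lower_bound_selector} to the $d$-cycle selectors $\mathcal{S}_{d,j}$ introduced just above the statement, so the whole argument reduces to verifying that the hypotheses of Lemma~\ref{lem:lower_bound_selector} apply and then reading off the constants.

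First I would confirm that $\mathcal{S}_{d,j}(X)=X^{(d,j)}$. Since $\mathcal{S}_{d,j}$ starts in state $0$ and moves from state $i$ to $(i+1)\bmod d$ on every input symbol regardless of its value, after reading $i$ symbols it sits in state $i\bmod d$; hence it is in the unique selecting state exactly at the positions $i\equiv j\pmod d$, which are precisely the indices $j,j+d,j+2d,\dots$ defining $X^{(d,j)}$.

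Second I would check the convergence hypothesis of Lemma~\ref{lem:lower_bound_selector}, namely that the empirical distribution of the states of $M_{\mathcal{S}_{d,j}}$ along its run on $X$ converges to its stationary distribution $\pi_{\mathcal{S}_{d,j}}$. This is immediate and uniform over all $X$: the run of $M_{\mathcal{S}_{d,j}}$ is the fixed periodic state sequence $0,1,\dots,d-1,0,1,\dots$, independent of $X$, so among the first $n$ steps each state is visited $n/d+O(1)$ times and the empirical state distribution tends to the uniform distribution on $Q$. The $d$-cycle is irreducible in the sense of the paper (all $d$ states form a single communicating class), and its unique stationary distribution is exactly this uniform distribution, so $\pi_{\mathcal{S}_{d,j}}(q)=\tfrac1d$ for all $q$ and $\lambda_{\mathcal{S}_{d,j}}=\sum_{q\in S}\pi_{\mathcal{S}_{d,j}}(q)=\tfrac1d$.

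Finally, plugging $\lambda_{\mathcal{S}_{d,j}}=\tfrac1d$ and $\mathcal{S}_{d,j}(X)=X^{(d,j)}$ into the conclusion of Lemma~\ref{lem:lower_bound_selector} gives $\dim_{\mathrm{FS}}(X^{(d,j)})\ge d\bigl(\dim_{\mathrm{FS}}(X)-(1-\tfrac1d)\bigr)=d\bigl(\dim_{\mathrm{FS}}(X)-\tfrac{d-1}{d}\bigr)$, which is the claim. There is no substantive obstacle here; the only subtlety worth a sentence is that $M_{\mathcal{S}_{d,j}}$ is periodic, but periodicity is harmless since the paper's notion of irreducibility already guarantees a unique stationary distribution and the empirical convergence is obtained by a direct counting argument rather than by an ergodic-type mixing statement.
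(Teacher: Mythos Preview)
Your proposal is correct and follows essentially the same route as the paper: verify that the $d$-cycle selector $\mathcal{S}_{d,j}$ satisfies the hypotheses of Lemma~\ref{lem:lower_bound_selector} (irreducibility, empirical state frequencies converging to the uniform stationary distribution for every $X$, and $\lambda_{\mathcal{S}_{d,j}}=1/d$), and then read off the inequality. Your extra remarks about periodicity and the explicit counting argument are sound and merely flesh out what the paper leaves implicit.
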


\begin{proof}
	As noted above, the selector $\mathcal{S}_{d,j}$ defining the $j$th arithmetic progression subsequence 
	has $\lambda_{\mathcal{S}_{d,j}} = 1/d$ and a uniform stationary distribution that the empirical state frequencies 
	converge to for every input sequence $X$.  
	Hence the hypotheses of Lemma~\ref{lem:lower_bound_selector} are satisfied.  
	Applying Lemma~\ref{lem:lower_bound_selector} with this selector yields the desired inequality.
\end{proof}

\subsection{Necessity and Tightness of the Main Bound}

\begin{figure}[h!]
	\centering
	\begin{tikzpicture}[shorten >=1pt, node distance=2cm, on grid, auto]
		% States
		\node[state, initial] (a) {$a$};
		\node[state] (b) [right=of a] {$b$};
		\node[state] (d) [below=of a] {$d$};
		\node[state] (c) [below=of b] {$c$};
		
		% Transitions
		\path[->]
		% Top row transitions
		(a) edge[bend left=15] node {0/1} (b)
		(b) edge[bend left=15] node {0} (a)
		% Bottom row transitions (direction reversed between c and d)
		(c) edge[bend left=15] node {0/1} (d)
		(d) edge[bend left=15] node {1} (c)
		% Vertical / diagonal transitions (clean, no crossings)
		(b) edge[bend left=10] node[pos=0.3, right] {1} (c)
		(d) edge[bend left=10] node[pos=0.3, left] {0} (a);
	\end{tikzpicture}
	\caption{%
		Base automaton used to define the selectors in the examples below.
		Each example corresponds to a distinct choice of selecting states within this automaton.
	}
	\label{fig:selector-automaton}
\end{figure}
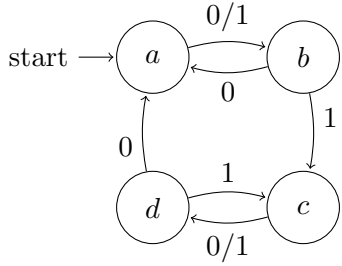

We now examine the role of the assumptions and sharpness of the main theorem by means of concrete examples.  
All the selectors considered in this subsection are derived from the basic automaton shown in Figure~\ref{fig:selector-automaton}.  
These examples demonstrate three key aspects of the theorem:  
(i)~the necessity of the condition $ \dim_{\mathrm{FS}}(X) \ge 1 - \delta $,  
(ii)~the tightness of the inequality, and  
(iii)~the possibility of strict inequality. Fix a normal sequence with binary expansion $r_1 r_2 r_3 r_4 \ldots$.  
The \emph{diluted sequence} obtained by inserting a $0$ in every alternate position:  
$r_1 0 r_2 0 r_3 0 r_4 0 \ldots$ has finite-state dimension $ \tfrac{1}{2} $ \cite{Dai2004}.

\paragraph{Example 1 (Necessity of the $1-\delta$ condition).}
We show that the assumption $ \dim_{\mathrm{FS}}(X) \ge 1 - \delta $ is essential by constructing a sequence $ X $ and selector $ \mathcal{S}_1 $ for which the inequality in the main theorem fails.  
Let $ \mathcal{S}_1 $ be obtained from the automaton in Figure~\ref{fig:selector-automaton} by designating only state $ a $ as the selecting state.  
When run on the diluted sequence, the automaton alternates between $ a $ and $ b $, always selecting the digits in odd positions, which form the original normal sequence.  
Hence, $ \lambda_{\mathcal{S}_1} = \tfrac{1}{4} $, $ \dim_{\mathrm{FS}}(\mathcal{S}_1(X)) = 1 $, and $ \Dim_{\mathrm{FS}}(\mathcal{S}_1^{c}(X)) = 0 $.  
The left-hand side of the inequality equals $ \tfrac{1}{4} $, while $ \dim_{\mathrm{FS}}(X) = \tfrac{1}{2} $.  
Thus,
$
\lambda_{\mathcal{S}_1}\dim_{\mathrm{FS}}(\mathcal{S}_1(X))
+(1-\lambda_{\mathcal{S}_1})\Dim_{\mathrm{FS}}(\mathcal{S}_1^{c}(X))
< \dim_{\mathrm{FS}}(X),
$
showing that the $1-\delta$ requirement cannot be omitted.

\paragraph{Example 2 (Tightness).}
Let $ \mathcal{S}_2 $ have selecting states $ a $ and $ c $, so that $ \lambda_{\mathcal{S}_2} = \tfrac{1}{2} $.  
Running on the diluted sequence, this selector outputs the normal sequence $ r_1 r_2 r_3 \ldots $, while its complement selects only zeros.  
Hence $ \dim_{\mathrm{FS}}(\mathcal{S}_2(X)) = 1 $ and $ \Dim_{\mathrm{FS}}(\mathcal{S}_2^{c}(X)) = 0 $, so that
$
\lambda_{\mathcal{S}_2}\dim_{\mathrm{FS}}(\mathcal{S}_2(X))
+(1-\lambda_{\mathcal{S}_2})\Dim_{\mathrm{FS}}(\mathcal{S}_2^{c}(X))
= \dim_{\mathrm{FS}}(X),
$
demonstrating that the bound can be tight.

\paragraph{Example 3 (Strict inequality).}
Finally, let $ \mathcal{S}_3 $ select in states $ a, c, d $, giving $ \lambda_{\mathcal{S}_3} = \tfrac{3}{4} $.  
As before, $ \dim_{\mathrm{FS}}(\mathcal{S}_3(X)) = 1 $ and $ \Dim_{\mathrm{FS}}(\mathcal{S}_3^{c}(X)) = 0 $, and therefore we obtain
$
\lambda_{\mathcal{S}_3}\dim_{\mathrm{FS}}(\mathcal{S}_3(X))
+(1-\lambda_{\mathcal{S}_3})\Dim_{\mathrm{FS}}(\mathcal{S}_3^{c}(X))
=\tfrac{3}{4}
> \tfrac{1}{2}
=\dim_{\mathrm{FS}}(X),
$
demonstrating that the inequality can be strict.

\bibliography{main}
\bibliographystyle{plain}

\end{document}